\documentclass[11pt]{article}
\usepackage[a4paper,margin=2.5cm]{geometry}
\usepackage{mathptmx}
\usepackage{amsmath,amssymb,amsthm}
\usepackage{graphicx}
\usepackage{booktabs}
\usepackage[font=small,labelfont=bf]{caption}
\usepackage{float}
\newfloat{algorithm}{tbp}{loa}
\floatname{algorithm}{Algorithm}
\newenvironment{algorithmic}[1][]{\begin{list}{}{\setlength{\leftmargin}{1em}\setlength{\itemsep}{2pt}}}{\end{list}}
\newcommand{\Require}{\item[]\textbf{Input:} }
\newcommand{\State}{\item[]}
\newcommand{\For}[1]{\item[]\textbf{For} #1:}
\newcommand{\EndFor}{\item[]\textbf{End for}}
\newcommand{\While}[1]{\item[]\textbf{While} #1:}
\newcommand{\EndWhile}{\item[]\textbf{End while}}
\newcommand{\If}[1]{\item[]\textbf{If} #1:}
\newcommand{\EndIf}{\textbf{End if}}
\newcommand{\Return}{\textbf{return} }
\newcommand{\Comment}[1]{\quad\textit{(#1)}}
\usepackage[numbers,sort&compress]{natbib}
\usepackage{authblk}
\usepackage{xcolor}

\usepackage[colorlinks=true,linkcolor=blue,citecolor=blue,urlcolor=blue]{hyperref}

\newtheorem{theorem}{Theorem}
\newtheorem{proposition}[theorem]{Proposition}
\newtheorem{lemma}[theorem]{Lemma}
\newtheorem{corollary}[theorem]{Corollary}

\title{Riemannian Formulation of Sample-Based Quantum Krylov Solvers for Elliptic Problems:\\ Energy-Norm Guarantees and Duct-Flow Applications}

\author[1,2]{Doyeol (David) Ahn\thanks{Corresponding author: \href{mailto:dahn@uos.ac.kr}{dahn@uos.ac.kr}}}
\author[1]{Byeongyong Park}
\affil[1]{Center for Quantum Information Processing, University of Seoul, 163 Seoulsiripdae-ro, Dongdaemun-gu, Seoul 02504, Republic of Korea}
\affil[2]{Singularity Quantum Inc., 9506 Villa Isle Drive, Villa Park, California 92861, USA}
\date{}

\begin{document}
\maketitle

\begin{abstract}
Sample-based quantum linear solvers measure quantum Krylov states $e^{-iAT}|b\rangle$ and reconstruct the solution of $Ax=b$ classically on the observed basis states. We develop this approach for positive-definite systems that arise from elliptic partial differential equations. We formulate the sampling in any fixed basis with an efficient transform and reconstruct the solution by a Galerkin solve on the sampled set. For this reconstruction we derive expected and high-probability energy-norm error bounds, a sample complexity in terms of the smallest sampling probability on an essential set, a computable residual certificate, and a support-completeness theorem for the dephased outcome law. We formulate the method for elliptic problems on Riemannian cross sections. In two dimensions the stiffness matrix is metric-independent in conformal coordinates, and curved vessels lead to the same matrix class. Killing symmetries of the cross section give selection rules for the measurement outcomes, and the Lie derivative of the metric gives first-order bounds for geometric perturbations. As examples we treat the implicit step of pulsatile flow in elliptic, lens-shaped, and curved ducts. In simulations of up to 12 qubits, the number of Fourier modes required for $1\%$ error grows from 14 to 60 while the number of unknowns grows from 64 to 4096. A classical residual-greedy selection is nevertheless more efficient than quantum sampling in all cases studied. On trapped-ion and superconducting processors with six and eight qubits, the outcome laws follow a replacement-noise model whose noiseless fraction decays by about $0.6\%$ per logical two-qubit gate on both platforms, and a parity selection rule rejects shots at the predicted rate. At these sizes the samples cover nearly the whole admissible space, so the experiments test the outcome laws but not dimension reduction.
\end{abstract}

\section{Introduction}\label{sec:introduction}
Many problems in science and engineering lead to large linear systems $Ax=b$. Quantum algorithms for this task began with the algorithm of Harrow, Hassidim, and Lloyd~\cite{hhl} and were improved by later methods based on linear combinations of unitaries and quantum signal processing~\cite{cks}. These algorithms prepare the solution as a quantum state. In order to read out the solution, however, one must estimate many amplitudes, and the required circuits are deep. Variational methods~\cite{vqls,xu_vqls} use shallow circuits, but they rely on a classical optimization without guarantees.

Sample-based methods take a different route. In sample-based quantum diagonalization (SQD), a quantum processor generates bitstrings, and a classical computer diagonalizes the Hamiltonian in the subspace spanned by these bitstrings~\cite{sqd}. Its performance depends on recovery procedures that repair noisy samples by using constraints visible in the measurement outcomes, such as particle-number symmetry~\cite{sqd}. These procedures have been extended to strongly correlated systems~\cite{csqd} and, through code-space recovery, to problems without a native constraint~\cite{csr}. Sample-based Krylov quantum diagonalization (SKQD) uses time-evolved states $e^{-iHT}|\psi\rangle$ as the source of the bitstrings and comes with convergence guarantees~\cite{skqd}. The same idea applies to linear systems: one measures quantum Krylov states $e^{-iAT}|b\rangle$ generated by the Hermitian matrix $A$ itself and solves the system classically on the observed basis states. A related approach was developed concurrently by Park \emph{et al.}~\cite{park}, who apply sample-based Krylov diagonalization to general Hermitian linear systems with a variational reconstruction based on the cost Hamiltonian and prove support-discovery guarantees for that setting. The present paper is self-contained. We treat positive-definite systems from elliptic partial differential equations and reconstruct the solution by a Galerkin solve. The energy-norm bounds of Sec.~\ref{sec:galerkin-theory} are derived directly from the outcome law and do not rely on the sampling theorems of Ref.~\cite{park}.

In this paper we develop the method for linear systems that arise from elliptic partial differential equations. These systems are positive definite, and their solutions are smooth. Two observations guide the work. First, the solution of such a system is not concentrated in the grid-point basis, but it is strongly concentrated in a sine or cosine basis matched to the boundary conditions. We therefore measure in such a basis. Second, for a positive-definite system a Galerkin solve on the sampled set is simpler than a variational reconstruction and is optimal in the energy norm. We derive error bounds for this reconstruction and relate them to the sampling probabilities.

Elliptic problems on curved domains are naturally formulated on a Riemannian manifold~\cite{Nakahara2003,Jost2017}. A cross section with coordinates $u^i$ carries a metric $g_{ij}$, and the physical operator becomes a weighted Laplace--Beltrami operator. In this language a change of coordinates is a diffeomorphism, and the weak form of the problem is invariant under it. The metric then enters the discrete matrix only through the combinations $\sqrt g\,g^{ij}$ and $\sqrt g$. In two dimensions the first combination is invariant under conformal changes of the metric, which makes the stiffness matrix independent of the geometry. Symmetries of the cross section are isometries generated by Killing fields $X$ with $\mathcal L_Xg=0$, where $\mathcal L_X$ is the Lie derivative. They give selection rules that tell us which measurement outcomes can be discarded. Deformations of the cross section are generated by vector fields $V$, and the Lie derivative $\mathcal L_Vg$ gives their first-order effect on the matrix. We use these notions throughout Sec.~\ref{sec:geometry} and summarize them in Appendix~\ref{app:riemann}.

We then formulate the method for elliptic problems on a Riemannian cross section. We use the formulation for the implicit step of pulsatile flow in elliptic, lens-shaped, and curved ducts. These ducts are models of compressed and eccentrically narrowed coronary arteries~\cite{Zamir2005}. We compare the quantum sampler with classical mode selection, and we test the outcome laws on six- and eight-qubit IonQ and IBM processors. We do not claim a quantum advantage for these examples. The theory and the numerical results instead identify where the sampling is efficient and where it is not.

The contributions of this paper are as follows. (1) We formulate Krylov-state sampling in any fixed basis with an efficient transform (Sec.~\ref{sec:basis}). This follows from unitary equivalence, but it is what makes the method useful for partial differential equations. (2) For positive-definite systems we replace the variational reconstruction by a Galerkin solve and derive expected and high-probability energy-norm bounds, a sample complexity, a sampling efficiency, a computable residual certificate, and a support-completeness theorem for the dephased outcome law (Sec.~\ref{sec:galerkin-theory}). The optimality of the Galerkin solution is the classical C\'ea property; the new elements are the bounds that connect the energy-norm error to the outcome law. (3) We formulate the method on Riemannian cross sections (Sec.~\ref{sec:geometry}). The conformal invariance of the Dirichlet energy, Killing symmetries, and first-order shape calculus are standard tools. Here we use them to identify the matrix class, to construct measurement filters with a guarantee, and to bound the effect of geometric uncertainty. (4) We test the method on elliptic, lens-shaped, and curved ducts in simulation and on two quantum processors (Secs.~\ref{sec:results} and~\ref{sec:hardware}), and we compare it with classical mode selection at equal subspace size.

The remainder of this paper is organized as follows. Section~\ref{sec:algorithm} describes the sampling protocol, measurement in a general basis, and the Galerkin reconstruction. Section~\ref{sec:galerkin-theory} derives the energy-norm guarantees, the support-completeness theorem, and the effect of finite time averaging. Section~\ref{sec:geometry} formulates the method on Riemannian cross sections and introduces the duct examples. Section~\ref{sec:results} presents numerical results, and Sec.~\ref{sec:hardware} describes the hardware validation on two platforms. Sections~\ref{sec:discussion} and~\ref{sec:conclusion} discuss the results and conclude. Appendix~\ref{app:riemann} summarizes the Riemannian geometry used in Sec.~\ref{sec:geometry}, and Appendix~\ref{app:supp} collects supplementary analyses.

\section{Sample-based Krylov solver}\label{sec:algorithm}
\subsection{Problem and sampling protocol}\label{sec:problem}
Let $A=A^\dagger\in\mathbb C^{N\times N}$ be invertible, with $N=2^n$ and $\|b\|=1$. We assume spectral bounds
\begin{equation}
 \operatorname{spec}(A)\subseteq[-B,-a]\cup[a,B],\qquad 0<a\leq B,\qquad \kappa=B/a .
 \label{eq:spectrum}
\end{equation}
The bounds may be conservative. The normalized solution is
\begin{equation}
 |x\rangle=\frac{A^{-1}|b\rangle}{\|A^{-1}b\|},
 \label{eq:target}
\end{equation}
and we write $x_z=\langle z|x\rangle$ for its amplitude at the basis string $z$.

In quantum Krylov-state sampling, each shot prepares $|b\rangle$, evolves it under $A$ for a time $T$, and measures $e^{-iAT}|b\rangle$ in the computational basis. If $T$ is drawn from a distribution $\mu$, the outcome law is
\begin{equation}
 q(z)=\mathbb E_{T\sim\mu}\,|\langle z|e^{-iAT}|b\rangle|^2 .
 \label{eq:outcome-law}
\end{equation}
In this paper $\mu$ is a finite set of times $T_1,\dots,T_d$ with equal or prescribed shot counts. The evolution can be compiled by product formulas or by randomized methods such as qDRIFT~\cite{qdrift}. Let $\mathcal S$ be the set of distinct observed strings and $\mathcal S_M$ the set observed in $M$ independent shots.

\subsection{Measurement in a general orthonormal basis}\label{sec:basis}
The computational basis is not essential to the protocol. Let $F$ be a unitary matrix with an efficient circuit $F_q$, and define the basis vectors $|z\rangle_F=F^\dagger|z\rangle$. Measuring $F_qe^{-iAT}|b\rangle$ in the computational basis is equivalent to measuring $e^{-iAT}|b\rangle$ in the basis $\{|z\rangle_F\}$. Since
\begin{equation}
 F e^{-iAT}F^\dagger=e^{-i\tilde AT},\qquad \tilde A=FAF^\dagger,\qquad \tilde b=Fb,
 \label{eq:basis-change}
\end{equation}
the protocol in the basis $F$ is the computational-basis protocol for the unitarily equivalent system $\tilde A\tilde x=\tilde b$. The spectrum, the spectral bounds of Eq.~\eqref{eq:spectrum}, and $\|A^{-1}b\|$ are unchanged. Therefore every statement about the computational-basis protocol holds in the basis $F$, with $x_z$ replaced by the coefficients $\langle z|_F x\rangle$. We state the results of Sec.~\ref{sec:galerkin-theory} directly in the basis $F$. The basis matters because concentration is basis dependent. For elliptic problems with smooth data, a sine or cosine basis matched to the boundary conditions concentrates the solution on a few low wave vectors, whereas the computational (grid-point) basis does not. Sections~\ref{sec:geometry} and~\ref{sec:results} use this freedom.

\subsection{Positive-definite systems and Galerkin reconstruction}\label{sec:galerkin}
When $A$ is Hermitian positive definite, a second reconstruction is available. The duct examples use real symmetric matrices. Let $x^*=A^{-1}b$ denote the unnormalized solution, with $\|b\|=1$, and let $\hat x_z=\langle z|_Fx^*\rangle$ be its coefficients in the measurement basis. For a set $\mathcal S$ we define $V_{\mathcal S}=\mathrm{span}\{|z\rangle_F:z\in\mathcal S\}$. The Galerkin solution $x_{\mathcal S}\in V_{\mathcal S}$ satisfies $\langle v,b-Ax_{\mathcal S}\rangle=0$ for all $v\in V_{\mathcal S}$. With $\hat A=FAF^\dagger$ and $\hat b=Fb$, it is obtained from
\begin{equation}
 \hat A[\mathcal S,\mathcal S]\,y=\hat b[\mathcal S],\qquad x_{\mathcal S}=F^\dagger\mathrm{Embed}(y,\mathcal S),
 \label{eq:galerkin}
\end{equation}
where $\hat A[\mathcal S,\mathcal S]$ is the principal submatrix indexed by $\mathcal S$. Since it is a principal submatrix of a positive-definite matrix, Eq.~\eqref{eq:galerkin} has a unique solution.

\begin{lemma}[Galerkin optimality]\label{lem:galerkin}
Let $\|v\|_A^2=\langle v,Av\rangle$. The Galerkin solution satisfies $\|x_{\mathcal S}-x^*\|_A=\min_{v\in V_{\mathcal S}}\|v-x^*\|_A$. If $\mathcal S\subseteq\mathcal S'$, then $\|x_{\mathcal S'}-x^*\|_A\leq\|x_{\mathcal S}-x^*\|_A$.
\end{lemma}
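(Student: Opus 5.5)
The plan is to use the standard C\'ea argument, built on the Galerkin orthogonality condition that defines $x_{\mathcal S}$. Since $A$ is Hermitian positive definite, $\langle u,v\rangle_A=\langle u,Av\rangle$ is an inner product on $\mathbb C^N$, and its norm is $\|\cdot\|_A$. The defining condition $\langle v,b-Ax_{\mathcal S}\rangle=0$ for all $v\in V_{\mathcal S}$ will be rewritten using $b=Ax^*$ as
\[
\langle v,A(x^*-x_{\mathcal S})\rangle=\langle v,x^*-x_{\mathcal S}\rangle_A=0\qquad\text{for all }v\in V_{\mathcal S}.
\]
So the error $e=x^*-x_{\mathcal S}$ is $A$-orthogonal to $V_{\mathcal S}$. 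In other words, $x_{\mathcal S}$ is the $A$-orthogonal projection of $x^*$ onto $V_{\mathcal S}$.

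Before using this, I will check that $x_{\mathcal S}$ really lies in $V_{\mathcal S}$ and satisfies the condition. By construction $x_{\mathcal S}=F^\dagger\mathrm{Embed}(y,\mathcal S)$ is a combination of the vectors $|z\rangle_F$ with $z\in\mathcal S$, so it is in $V_{\mathcal S}$. Testing the condition against $v=|z\rangle_F$ for $z\in\mathcal S$ gives exactly the rows of $\hat A[\mathcal S,\mathcal S]y=\hat b[\mathcal S]$, because $\langle z|_F A F^\dagger|z'\rangle=\hat A_{zz'}$ and $\langle z|_F b\rangle=\hat b_z$. Existence and uniqueness were already noted after Eq.~\eqref{eq:galerkin}.

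For the first claim, take any $v\in V_{\mathcal S}$ and write $v-x^*=(v-x_{\mathcal S})-e$. Here $v-x_{\mathcal S}\in V_{\mathcal S}$ and $e$ is $A$-orthogonal to $V_{\mathcal S}$. The cross terms therefore vanish (both of them, since $\langle\cdot,\cdot\rangle_A$ is Hermitian in the complex case), and the Pythagorean identity gives
\[
\|v-x^*\|_A^2=\|v-x_{\mathcal S}\|_A^2+\|x_{\mathcal S}-x^*\|_A^2\geq\|x_{\mathcal S}-x^*\|_A^2 .
\]
Equality holds at $v=x_{\mathcal S}$, so the minimum is attained there.

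The monotonicity claim then follows directly. If $\mathcal S\subseteq\mathcal S'$, then $V_{\mathcal S}\subseteq V_{\mathcal S'}$, so $x_{\mathcal S}$ is one admissible candidate in the minimization over $V_{\mathcal S'}$. This gives $\|x_{\mathcal S'}-x^*\|_A\leq\|x_{\mathcal S}-x^*\|_A$.

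There is no real obstacle in this argument. The only point needing care is to keep track of the sesquilinear conventions in the complex Hermitian case. Specifically, I need that $\langle v,Aw\rangle=\overline{\langle w,Av\rangle}$, so that orthogonality is symmetric and both cross terms drop out.
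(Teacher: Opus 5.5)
Your proof is correct and follows the paper's argument: substitute $b=Ax^*$ into the Galerkin condition to get $A$-orthogonality of the error to $V_{\mathcal S}$, identify $x_{\mathcal S}$ as the $A$-orthogonal projection, and get monotonicity from $V_{\mathcal S}\subseteq V_{\mathcal S'}$. You also write out the Pythagorean step and check that $\hat A[\mathcal S,\mathcal S]y=\hat b[\mathcal S]$ encodes the Galerkin condition, which the paper leaves implicit.
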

\begin{proof}
By substituting $b=Ax^*$ into the Galerkin condition, we obtain $\langle v,A(x^*-x_{\mathcal S})\rangle=0$ for all $v\in V_{\mathcal S}$. Hence $x_{\mathcal S}$ is the $A$-orthogonal projection of $x^*$ onto $V_{\mathcal S}$. The second statement follows because $V_{\mathcal S}\subseteq V_{\mathcal S'}$.
\end{proof}

Lemma~\ref{lem:galerkin} is the classical C\'ea property~\cite{saad}. For general Hermitian systems one can instead minimize a projected cost function on $V_{\mathcal S}$, as in variational formulations~\cite{cost_hamiltonian,xu_vqls} and in the solver of Ref.~\cite{park}. For positive-definite systems the Galerkin solution is simpler: it minimizes the energy-norm error, requires only the principal submatrix $\hat A[\mathcal S,\mathcal S]$, and its error does not increase on nested sets.

\subsection{A practical solver for positive-definite systems}\label{sec:practical}
Algorithm~\ref{alg:solver} summarizes the variant used in Secs.~\ref{sec:geometry} and~\ref{sec:results}. Phase~1 measures Krylov states in the basis $F$ with a classical Galerkin solve after each state. Phase~2 is an optional classical expansion that adds the indices with the largest residual components. The function \textsc{Filter} removes indices at which the solution vanishes by symmetry or constraint, such as a parity rule or a zero mode. A filter must retain every potentially nonzero solution coefficient; discarding physically admissible modes invalidates the coverage bounds. The sampling times may be taken from a uniform grid, whose outcome law approaches the dephased law of Theorem~\ref{thm:support} at the rate given by Lemma~\ref{lem:finite}, or from the heuristic schedule of Algorithm~\ref{alg:times}. The latter is motivated in Sec.~\ref{sec:times} but carries no separate guarantee.

\begin{algorithm}[t]
\caption{Sample-based Krylov solver for positive-definite systems}\label{alg:solver}
\begin{algorithmic}[1]
\Require $A$, $b$, basis $F$ with circuit $F_q$, sampling times $T_1,\dots,T_d$, shots $M$, tolerance $\varepsilon$, expansion size $m$, maximum size $S_{\max}$
\State $\hat b\gets Fb$;\quad $\mathcal S\gets\emptyset$
\For{$k=1,\dots,d$}\Comment{Phase 1: quantum sampling}
  \State prepare $e^{-iAT_k}|b\rangle$, apply $F_q$, and measure $M/d$ times; let $\mathcal B$ be the observed indices
  \State $\mathcal S\gets\mathcal S\cup\textsc{Filter}(\mathcal B)$
  \State solve $\hat A[\mathcal S,\mathcal S]\,y=\hat b[\mathcal S]$;\quad $r\gets\hat b-\hat A\,\mathrm{Embed}(y,\mathcal S)$
  \If{$\|r\|\leq\varepsilon$} \Return $F^\dagger\mathrm{Embed}(y,\mathcal S)$ \EndIf
\EndFor
\While{$\|r\|>\varepsilon$ and $|\mathcal S|<S_{\max}$}\Comment{Phase 2: classical expansion}
  \State add to $\mathcal S$ the $m$ admissible indices outside $\mathcal S$ with the largest $|r_z|$
  \State solve $\hat A[\mathcal S,\mathcal S]\,y=\hat b[\mathcal S]$;\quad $r\gets\hat b-\hat A\,\mathrm{Embed}(y,\mathcal S)$
\EndWhile
\State \Return $F^\dagger\mathrm{Embed}(y,\mathcal S)$
\end{algorithmic}
\end{algorithm}

\begin{algorithm}[t]
\caption{Heuristic multiscale sampling times}\label{alg:times}
\begin{algorithmic}[1]
\Require positive estimates $\lambda_{\min}$, $\lambda_{\max}$ (certified bounds $a$, $B$ if guarantees are required), number of states $d\geq3$
\State $t_{\mathrm{low}}\gets\pi/\lambda_{\max}$;\quad $t_{\mathrm{high}}\gets2\pi/\lambda_{\min}$;\quad $\mathcal T\gets\{0\}$
\For{$k=1,\dots,d-1$}
  \State $\mathcal T\gets\mathcal T\cup\{t_{\mathrm{low}}(t_{\mathrm{high}}/t_{\mathrm{low}})^{(k-1)/(d-2)}\}$
\EndFor
\State \Return $\mathcal T$
\end{algorithmic}
\end{algorithm}

\section{Energy-norm guarantees for Galerkin reconstruction}\label{sec:galerkin-theory}
For the positive-definite case of Sec.~\ref{sec:galerkin}, an expectation bound in the energy norm follows directly from Lemma~\ref{lem:galerkin}. Throughout this section the measurement basis is $\{|z\rangle_F\}$, $x^*=A^{-1}b$ with $\|b\|=1$, $\hat x_z=\langle z|_Fx^*\rangle$, and $\lambda_{\min}$, $\lambda_{\max}$ are the extreme eigenvalues of $A$. For sampling times $T_1,\dots,T_d$ with $M/d$ shots each, the time-averaged outcome law is
\begin{equation}
 q(z)=\frac1d\sum_{k=1}^{d}\bigl|\langle z|_Fe^{-iAT_k}|b\rangle\bigr|^2 .
 \label{eq:q}
\end{equation}
Let $P_\lambda$ be the spectral projector of $A$ onto the eigenvalue $\lambda$, and define $a_\lambda(z)=\langle z|_FP_\lambda b\rangle$. Then
\begin{equation}
 \hat x_z=\sum_\lambda\frac{a_\lambda(z)}{\lambda},\qquad \hat b_z=\sum_\lambda a_\lambda(z).
 \label{eq:a}
\end{equation}

\begin{theorem}[Expected energy-norm error]\label{thm:error}
Let $\mathcal S$ be the set of distinct indices observed with $M/d$ shots on each Krylov state. Then
\begin{equation}
 \mathbb E\bigl[\|x_{\mathcal S}-x^*\|_A^2\bigr]\leq\lambda_{\max}\sum_z|\hat x_z|^2e^{-Mq(z)}.
 \label{eq:bound}
\end{equation}
\end{theorem}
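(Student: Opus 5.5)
By Lemma~\ref{lem:galerkin}, the Galerkin error is at most the energy-norm error of any competitor in $V_{\mathcal S}$. We take the competitor to be the truncation of $x^*$ to the observed indices, $v_{\mathcal S}=F^\dagger\mathrm{Embed}(\hat x[\mathcal S],\mathcal S)$. Its error $x^*-v_{\mathcal S}$ has coefficients $\hat x_z$ for $z\notin\mathcal S$ and zero otherwise. Since $F$ is unitary, $\|w\|_A^2\le\lambda_{\max}\|w\|^2=\lambda_{\max}\|Fw\|^2$, and therefore
\begin{equation}
 \|x_{\mathcal S}-x^*\|_A^2\le\|v_{\mathcal S}-x^*\|_A^2\le\lambda_{\max}\sum_z|\hat x_z|^2\,\mathbf 1[z\notin\mathcal S].
 \label{eq:plan-pointwise}
\end{equation}
This holds for every realization of the sampling. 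Taking expectations and using linearity reduces the theorem to the bound $\Pr[z\notin\mathcal S]\le e^{-Mq(z)}$ for each fixed index $z$.

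For the missing probability, write $p_k(z)=|\langle z|_Fe^{-iAT_k}|b\rangle|^2$ for the outcome probability on the $k$-th Krylov state. The shots are independent, with $M/d$ shots on each state, so
\begin{equation}
 \Pr[z\notin\mathcal S]=\prod_{k=1}^d\bigl(1-p_k(z)\bigr)^{M/d}.
\end{equation}
Applying $1-p\le e^{-p}$ factor by factor gives
\begin{equation}
 \Pr[z\notin\mathcal S]\le\exp\Bigl(-\tfrac Md\sum_{k=1}^d p_k(z)\Bigr)=e^{-Mq(z)},
\end{equation}
by the definition of $q$ in Eq.~\eqref{eq:q}. Substituting this into the expectation of Eq.~\eqref{eq:plan-pointwise} yields Eq.~\eqref{eq:bound}.

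There is no real obstacle in this argument, but two points need care. First, the proof takes $\mathcal S$ to be the raw set of observed indices. If a \textsc{Filter} is applied, the bound still holds provided the filter removes only indices with $\hat x_z=0$, because those indices contribute nothing to the sum. Second, when the shots are stratified by time rather than drawn i.i.d.\ from $q$, the product must be taken over the separate per-state blocks, exactly as above. This stratified form still gives the exponent $Mq(z)$ and needs no convexity argument, because the exponent is linear in the $p_k$.
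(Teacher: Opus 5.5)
Your proof is correct and follows the paper's argument essentially step for step: your truncation competitor $v_{\mathcal S}$ is exactly the orthogonal projection $\Pi_{\mathcal S}x^*$ used there, followed by the same $\lambda_{\max}$ tail bound and the same per-state product estimate with $1-p\le e^{-p}$. Your remark that a filter removing only indices with $\hat x_z=0$ leaves the bound intact is valid, and the paper's proof does not spell it out.
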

\begin{proof}
Let $\Pi_{\mathcal S}$ be the orthogonal projector onto $V_{\mathcal S}$. Since $\Pi_{\mathcal S}x^*\in V_{\mathcal S}$, Lemma~\ref{lem:galerkin} gives $\|x_{\mathcal S}-x^*\|_A\leq\|\Pi_{\mathcal S}x^*-x^*\|_A$. In general $\|v\|_A^2\leq\lambda_{\max}\|v\|^2$, so that
\begin{equation}
 \|x_{\mathcal S}-x^*\|_A^2\leq\lambda_{\max}\sum_{z\notin\mathcal S}|\hat x_z|^2.
 \label{eq:tail-energy}
\end{equation}
Let $p_{k}(z)$ be the probability of index $z$ in one shot on the $k$th state. The shots are independent. By using $1-p\leq e^{-p}$, we obtain
\begin{equation}
 \Pr(z\notin\mathcal S)=\prod_k\bigl(1-p_k(z)\bigr)^{M/d}\leq\exp\Bigl(-\frac Md\sum_kp_k(z)\Bigr)=e^{-Mq(z)}.
 \label{eq:miss}
\end{equation}
By substituting Eq.~\eqref{eq:miss} into the expectation of Eq.~\eqref{eq:tail-energy}, we obtain Eq.~\eqref{eq:bound}.
\end{proof}

For unequal shot counts $M_k$ on the $k$th state, the same proof gives the bound
\[
 \lambda_{\max}\sum_z|\hat x_z|^2\exp\Bigl[-\sum_kM_kp_k(z)\Bigr].
\]
 Theorem~\ref{thm:error} is an expectation statement. If $R(M)$ denotes its right-hand side divided by $\|x^*\|_A^2$, Markov's inequality gives
\[
 \Pr\bigl(\|x_{\mathcal S}-x^*\|_A>\varepsilon\|x^*\|_A\bigr)\leq R(M)/\varepsilon^2 .
\] A mean-square bound is therefore not a high-probability guarantee unless the budget includes the failure probability. Corollary~\ref{cor:highprob-galerkin} below gives such a budget.

\begin{corollary}[Sample complexity]\label{cor:complexity}
Let $\varepsilon>0$, and let $\mathcal E_\varepsilon$ be a set of indices with
\[
 \lambda_{\max}\sum_{z\notin\mathcal E_\varepsilon}|\hat x_z|^2\leq\frac{\varepsilon^2}{2}\|x^*\|_A^2 .
\]
Define $q_{\min}(\mathcal E_\varepsilon)=\min_{z\in\mathcal E_\varepsilon}q(z)$. If
\begin{equation}
 M\geq\frac{\ln(2\kappa/\varepsilon^2)}{q_{\min}(\mathcal E_\varepsilon)},
 \label{eq:M}
\end{equation}
then $\mathbb E[\|x_{\mathcal S}-x^*\|_A^2]\leq\varepsilon^2\|x^*\|_A^2$, where $\kappa=\lambda_{\max}/\lambda_{\min}$.
\end{corollary}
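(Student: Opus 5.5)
We start from Theorem~\ref{thm:error} and split the sum on its right-hand side according to whether $z$ lies in $\mathcal E_\varepsilon$:
\[
 \mathbb E\bigl[\|x_{\mathcal S}-x^*\|_A^2\bigr]\leq\lambda_{\max}\sum_{z\notin\mathcal E_\varepsilon}|\hat x_z|^2e^{-Mq(z)}+\lambda_{\max}\sum_{z\in\mathcal E_\varepsilon}|\hat x_z|^2e^{-Mq(z)} .
\]
For the first sum we use $e^{-Mq(z)}\leq1$. The defining property of $\mathcal E_\varepsilon$ then bounds it by $\tfrac{\varepsilon^2}{2}\|x^*\|_A^2$.

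For the second sum we use $q(z)\geq q_{\min}(\mathcal E_\varepsilon)$ for every $z\in\mathcal E_\varepsilon$. Together with Eq.~\eqref{eq:M} this gives $e^{-Mq(z)}\leq e^{-Mq_{\min}}\leq\varepsilon^2/(2\kappa)$. We also extend the sum to all $z$. Since $F$ is unitary, $\sum_z|\hat x_z|^2=\|x^*\|^2$, so the second sum is at most $\lambda_{\max}\|x^*\|^2\,\varepsilon^2/(2\kappa)=\tfrac{\varepsilon^2}{2}\lambda_{\min}\|x^*\|^2$.

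It remains to compare $\lambda_{\min}\|x^*\|^2$ with the energy norm. Since $A\succeq\lambda_{\min}I$, we have $\|x^*\|_A^2=\langle x^*,Ax^*\rangle\geq\lambda_{\min}\|x^*\|^2$, so the second sum is at most $\tfrac{\varepsilon^2}{2}\|x^*\|_A^2$. Adding the two halves gives $\varepsilon^2\|x^*\|_A^2$, which is the claim.

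This comparison is the only delicate step. The factor $\kappa$ in the logarithm is exactly what converts the Euclidean mass $\lambda_{\max}\|x^*\|^2$ into the energy norm. Keeping the crude bound $\sum_{z\in\mathcal E_\varepsilon}|\hat x_z|^2\leq\|x^*\|^2$ before this conversion makes the constants match exactly.
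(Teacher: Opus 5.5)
Your proof is correct and follows the paper's argument essentially step for step. Like the paper, you split the bound of Theorem~\ref{thm:error} over $\mathcal E_\varepsilon$ and its complement, bound the tail by the defining condition, and turn $\lambda_{\max}\|x^*\|^2e^{-Mq_{\min}}$ into at most $\kappa e^{-Mq_{\min}}\|x^*\|_A^2$ via $\|x^*\|_A^2\geq\lambda_{\min}\|x^*\|^2$ before applying Eq.~\eqref{eq:M}.
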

\begin{proof}
We split the sum in Eq.~\eqref{eq:bound} into $z\notin\mathcal E_\varepsilon$ and $z\in\mathcal E_\varepsilon$. The first part is at most $(\varepsilon^2/2)\|x^*\|_A^2$ by definition. The second part is at most $\lambda_{\max}\|x^*\|^2e^{-Mq_{\min}}$. Since $\|x^*\|_A^2\geq\lambda_{\min}\|x^*\|^2$, it is at most $\kappa e^{-Mq_{\min}}\|x^*\|_A^2$. Equation~\eqref{eq:M} makes it at most $(\varepsilon^2/2)\|x^*\|_A^2$.
\end{proof}

\begin{corollary}[High-probability Galerkin reconstruction]\label{cor:highprob-galerkin}
Let $\varepsilon$, $\mathcal E_\varepsilon$, and $q_{\min}(\mathcal E_\varepsilon)$ be as in Corollary~\ref{cor:complexity}, but with the stronger tail condition $\lambda_{\max}\sum_{z\notin\mathcal E_\varepsilon}|\hat x_z|^2\leq\varepsilon^2\|x^*\|_A^2$. For $0<\delta<1$, if
\begin{equation}
 M\geq\frac{\ln(|\mathcal E_\varepsilon|/\delta)}{q_{\min}(\mathcal E_\varepsilon)},
 \label{eq:galerkin-highprob-budget}
\end{equation}
then with probability at least $1-\delta$ every index of $\mathcal E_\varepsilon$ is observed and $\|x_{\mathcal S}-x^*\|_A\leq\varepsilon\|x^*\|_A$.
\end{corollary}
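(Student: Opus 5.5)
The plan is a union bound over the essential set, combined with the deterministic tail estimate from the proof of Theorem~\ref{thm:error}. For each fixed index $z\in\mathcal E_\varepsilon$, the miss probability estimate Eq.~\eqref{eq:miss} gives
\[
\Pr(z\notin\mathcal S)\leq e^{-Mq(z)}\leq e^{-Mq_{\min}(\mathcal E_\varepsilon)}.
\]
This uses only the independence of the shots and $1-p\leq e^{-p}$. For unequal shot counts $M_k$, the same estimate holds with the exponent $\sum_k M_kp_k(z)$.

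Summing over the $|\mathcal E_\varepsilon|$ indices gives
\[
\Pr(\mathcal E_\varepsilon\not\subseteq\mathcal S)\leq|\mathcal E_\varepsilon|\,e^{-Mq_{\min}(\mathcal E_\varepsilon)}.
\]
The budget Eq.~\eqref{eq:galerkin-highprob-budget} is equivalent to $Mq_{\min}\geq\ln(|\mathcal E_\varepsilon|/\delta)$, so this failure probability is at most $\delta$. Hence, with probability at least $1-\delta$, every index of $\mathcal E_\varepsilon$ is observed. If $q_{\min}=0$ the budget is infinite and there is nothing to prove, so we may assume $q_{\min}>0$.

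On the event $\mathcal E_\varepsilon\subseteq\mathcal S$, the error bound is deterministic. By Lemma~\ref{lem:galerkin} and Eq.~\eqref{eq:tail-energy},
\[
\|x_{\mathcal S}-x^*\|_A^2\leq\lambda_{\max}\sum_{z\notin\mathcal S}|\hat x_z|^2\leq\lambda_{\max}\sum_{z\notin\mathcal E_\varepsilon}|\hat x_z|^2\leq\varepsilon^2\|x^*\|_A^2 .
\]
The middle inequality holds because the complement of $\mathcal S$ is contained in the complement of $\mathcal E_\varepsilon$. The last inequality is the strengthened tail condition. Taking square roots gives the claim.

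There is no real obstacle here. The only point needing care is that the full tail budget $\varepsilon^2$, rather than $\varepsilon^2/2$, is used. This is legitimate because on the success event no essential index is missing, so the $\kappa e^{-Mq_{\min}}$ term that appeared in the proof of Corollary~\ref{cor:complexity} does not arise. That is why the logarithm contains $|\mathcal E_\varepsilon|/\delta$ and no factor $\kappa$.
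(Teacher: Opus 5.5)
Your proposal is correct and follows the paper's proof essentially step for step: the per-index miss bound from Eq.~\eqref{eq:miss}, a union bound over $\mathcal E_\varepsilon$, and then, on the event $\mathcal E_\varepsilon\subseteq\mathcal S$, the deterministic tail estimate Eq.~\eqref{eq:tail-energy} combined with the strengthened tail condition. Your remarks on the case $q_{\min}=0$ and on why no factor $\kappa$ appears in the budget are accurate but not needed for the argument.
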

\begin{proof}
By Eq.~\eqref{eq:miss}, $\Pr(z\notin\mathcal S)\leq e^{-Mq(z)}\leq\delta/|\mathcal E_\varepsilon|$ for every $z\in\mathcal E_\varepsilon$. The union bound gives $\Pr(\mathcal E_\varepsilon\not\subseteq\mathcal S)\leq\delta$. If $\mathcal E_\varepsilon\subseteq\mathcal S$, Eq.~\eqref{eq:tail-energy} and the tail condition give the error bound.
\end{proof}
Compared with Corollary~\ref{cor:complexity}, the high-probability budget replaces $\ln(2\kappa/\varepsilon^2)$ by $\ln(|\mathcal E_\varepsilon|/\delta)$. The essential set of Eq.~\eqref{eq:galerkin-highprob-budget} is defined in the energy norm, so it can contain modes of very small normalized weight. Complete discovery of $\mathcal E_\varepsilon$ is a sufficient condition and can be pessimistic, because a missed mode of small weight costs little error. Any filter must preserve the solution support, and pruning after sampling must be controlled separately.

\begin{proposition}[Lower bound and sampling efficiency]\label{prop:eff}
Let $K_\varepsilon$ be the smallest $|\mathcal S|$ for which the Galerkin error is at most $\varepsilon\|x^*\|_A$. Any sampling protocol needs $M\geq K_\varepsilon$ shots, because each shot adds at most one index. Since $q_{\min}(\mathcal E_\varepsilon)\leq1/|\mathcal E_\varepsilon|$, the right-hand side of Eq.~\eqref{eq:M} is at least $|\mathcal E_\varepsilon|\ln(2\kappa/\varepsilon^2)$.
\end{proposition}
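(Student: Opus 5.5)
The proposition has two independent parts, and I will prove them separately. Both are elementary counting arguments.

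For the lower bound, I will argue deterministically about the realized set. After $M$ shots the observed set $\mathcal S_M$ contains at most one new index per shot, so $|\mathcal S_M|\leq M$. Filtering only removes indices, so the set passed to the Galerkin solve also has size at most $M$.

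Suppose the protocol reaches Galerkin error at most $\varepsilon\|x^*\|_A$ on the realized set $\mathcal S$. By the definition of $K_\varepsilon$ as the minimal size of any set achieving this error, we get $|\mathcal S|\geq K_\varepsilon$, and hence $M\geq K_\varepsilon$.

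One point needs care. $K_\varepsilon$ must be read as a minimum over all index sets, not only sampled ones. Then the bound holds for every realization and every protocol, including adaptive ones and classical selection. Lemma~\ref{lem:galerkin} is not needed here, although its monotonicity shows that the set of sizes achieving error $\le\varepsilon\|x^*\|_A$ is upward closed, which makes $K_\varepsilon$ a well-defined threshold.

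For the second part, I will use that $q$ in Eq.~\eqref{eq:q} is a probability distribution. Each $|\langle z|_Fe^{-iAT_k}|b\rangle|^2$ sums to $\|e^{-iAT_k}b\|^2=\|b\|^2=1$ over the orthonormal basis, so averaging over $k$ gives $\sum_zq(z)=1$. Therefore
\[
|\mathcal E_\varepsilon|\,q_{\min}(\mathcal E_\varepsilon)\leq\sum_{z\in\mathcal E_\varepsilon}q(z)\leq1,
\]
which is the inequality $q_{\min}(\mathcal E_\varepsilon)\leq 1/|\mathcal E_\varepsilon|$. Dividing $\ln(2\kappa/\varepsilon^2)$ by $q_{\min}$ then gives the stated lower bound $|\mathcal E_\varepsilon|\ln(2\kappa/\varepsilon^2)$ on the right-hand side of Eq.~\eqref{eq:M}.

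Two degenerate cases need a remark. If $q_{\min}=0$, the right-hand side of Eq.~\eqref{eq:M} is $+\infty$ and the claim is trivial. The logarithm is nonnegative as long as $\varepsilon^2\leq2\kappa$; this always holds in the meaningful regime $\varepsilon\le1$, since $\kappa\geq1$.

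I expect no real obstacle. The only subtle step is fixing the quantifier in the definition of $K_\varepsilon$ so that the first claim is a worst-case statement about every protocol rather than an expectation.
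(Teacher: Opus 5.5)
Your proof is correct and is essentially the paper's own argument, which appears inline in the statement. It combines two facts: $|\mathcal S_M|\le M$ because each shot contributes at most one index, together with the minimality of $K_\varepsilon$ over all index sets; and $q_{\min}(\mathcal E_\varepsilon)\le 1/|\mathcal E_\varepsilon|$ because $q$ is a probability distribution. Your checks that $\ln(2\kappa/\varepsilon^2)\ge 0$, which is needed for the division to preserve the inequality, and that the case $q_{\min}=0$ is trivial fill in details the paper leaves implicit.
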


We therefore define the sampling efficiency
\begin{equation}
 \eta_{\mathrm s}=|\mathcal E_\varepsilon|\,q_{\min}(\mathcal E_\varepsilon)\leq1 .
 \label{eq:eta}
\end{equation}
A small value of $\eta_{\mathrm s}$ identifies an essential index with a low sampling probability. Inefficiency can arise from repeated samples of already observed indices or from probability mass outside $\mathcal E_\varepsilon$. This is a coupon-collector problem with a nonuniform distribution~\cite{Flajolet1992}. The relevant property of $q$ is its flatness on $\mathcal E_\varepsilon$, not its resemblance to $|\hat x|^2$. The two bounds of Proposition~\ref{prop:eff} need not match when $\eta_{\mathrm s}$ is small.

\begin{corollary}[Replacement noise]\label{cor:noise}
Suppose that each shot is independently replaced, with probability $p<1$, by an index drawn from an arbitrary distribution. Then the outcome law satisfies $q'(z)\geq(1-p)q(z)$, and the sufficient budget of Corollary~\ref{cor:complexity} increases at most by the factor $1/(1-p)$.
\end{corollary}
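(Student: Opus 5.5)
The plan is to work at the level of a single shot and then pass the pointwise bound through the proofs of Theorem~\ref{thm:error} and Corollary~\ref{cor:complexity}. On the $k$th Krylov state, let $p_k(z)$ be the noiseless single-shot probability. Let $r_k$ be the arbitrary replacement distribution, which may depend on $k$ and on the circuit. The noisy single-shot law is then the mixture
\[
p_k'(z)=(1-p)\,p_k(z)+p\,r_k(z).
\]
Since $r_k(z)\geq0$, we have $p_k'(z)\geq(1-p)p_k(z)$ for every $z$ and $k$. Averaging over $k$ as in Eq.~\eqref{eq:q} gives $q'(z)\geq(1-p)q(z)$, which is the first claim. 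It uses only independence of the replacement event from the noiseless outcome in each shot, which is part of the hypothesis.

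For the budget, we reuse the miss-probability estimate. The shots remain independent, so Eq.~\eqref{eq:miss} holds with $p_k'$ in place of $p_k$:
\[
\Pr(z\notin\mathcal S)\leq e^{-Mq'(z)}\leq e^{-M(1-p)q(z)}.
\]
The deterministic part of the proof of Theorem~\ref{thm:error}, namely Lemma~\ref{lem:galerkin} and Eq.~\eqref{eq:tail-energy}, does not involve the outcome law. It holds for any realized set $\mathcal S$, including sets enlarged by spurious noisy indices, which by the monotonicity in Lemma~\ref{lem:galerkin} can only reduce the error.

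The noisy bound is therefore Eq.~\eqref{eq:bound} with $M$ replaced by $(1-p)M$. Repeating the split in the proof of Corollary~\ref{cor:complexity} with $q_{\min}'(\mathcal E_\varepsilon)\geq(1-p)q_{\min}(\mathcal E_\varepsilon)$ shows that
\[
M\geq\frac{\ln(2\kappa/\varepsilon^2)}{(1-p)\,q_{\min}(\mathcal E_\varepsilon)}
\]
suffices for the same mean-square guarantee. This is the original budget multiplied by $1/(1-p)$. The same substitution applies to Corollary~\ref{cor:highprob-galerkin}.

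I expect the only delicate point to be the phrase ``arbitrary distribution''. The replacement law may depend on the state index and even be adversarial, but the argument needs only its nonnegativity and the independence of shots. If replacements were correlated across shots, the product form in Eq.~\eqref{eq:miss} would fail. In that case I would condition on the set of replaced shots and apply the bound only to the non-replaced ones, which gives the same factor in expectation.
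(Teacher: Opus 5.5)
Your argument is correct and matches the reasoning the paper relies on (it states the corollary without a separate proof): the per-shot mixture $p_k'=(1-p)p_k+p\,r_k\geq(1-p)p_k$ gives $q'\geq(1-p)q$, and substituting $q_{\min}'\geq(1-p)q_{\min}$ into Eq.~\eqref{eq:miss} and into the split of Corollary~\ref{cor:complexity} gives the factor $1/(1-p)$. Your closing remark on correlated replacement is wrong, however, and this is why the paper insists that the shot bound assumes independent replacement: conditioning on the replaced shots only gives $\Pr(z\notin\mathcal S)\leq\mathbb E\bigl[\exp(-\sum_kN_kp_k(z))\bigr]$ with random unreplaced counts $N_k$, and by Jensen this expectation is at least, not at most, $e^{-(1-p)Mq(z)}$ (under all-or-nothing replacement with an adversarial $r$, the miss probability stays at least $p$ for every $M$).
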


By Lemma~\ref{lem:galerkin}, adding corrupted indices to a retained set cannot increase its error. At a fixed total number of shots, however, replacement can remove useful ideal samples and increase the error relative to an uncontaminated run. The shot bound assumes independent replacement. Arbitrary gate and readout channels need not satisfy the pointwise dominance $q'\geq(1-p)q$. For them we use a weaker statement. If the actual law $q'$ lies within total-variation distance $\eta$ of $q$, then $|q'(z)-q(z)|\leq\eta$ for every $z$, and Corollaries~\ref{cor:complexity} and~\ref{cor:highprob-galerkin} hold with $q_{\min}$ replaced by $q_{\min}-\eta$ whenever this is positive.

For the residual $r=b-Ax_{\mathcal S}$, the identity $\|x^*-x_{\mathcal S}\|_A^2=r^\dagger A^{-1}r$ gives the computable certificate
\begin{equation}
 \frac{\|x^*-x_{\mathcal S}\|_A}{\|x^*\|_A}\leq\sqrt\kappa\,\frac{\|r\|}{\|b\|}.
 \label{eq:residual-certificate}
\end{equation}
Consequently, the residual stopping rule of Algorithm~\ref{alg:solver} certifies a relative energy error of at most $\sqrt\kappa\,\varepsilon$, not generally $\varepsilon$. For an energy tolerance $\varepsilon_A$ one uses a certified condition-number bound and the residual tolerance $\varepsilon_A/\sqrt\kappa$. Approximate matrix assembly and inexact reduced solves require additional error terms.

\subsection{Support completeness and dephasing}\label{sec:support}
The following statement explains why a single state can fail and a long time average cannot, and it gives an a priori floor for $q_{\min}$. We define the dephased law $q^\infty(z)=\sum_\lambda|a_\lambda(z)|^2$, where the sum runs over the distinct eigenvalues of $A$.

\begin{theorem}[Support completeness]\label{thm:support}
For every index $z$,
\begin{equation}
 q^\infty(z)\geq\frac{|\hat x_z|^2}{\sum_\lambda\lambda^{-2}}\geq\frac{|\hat x_z|^2}{\operatorname{tr}A^{-2}}.
 \label{eq:support}
\end{equation}
In particular, $\operatorname{supp}\hat x\subseteq\operatorname{supp}q^\infty$.
\end{theorem}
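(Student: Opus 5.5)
The plan is to apply the Cauchy--Schwarz inequality to the spectral representation of $\hat x_z$ in Eq.~\eqref{eq:a}. For a fixed index $z$ we have $\hat x_z=\sum_\lambda \lambda^{-1}a_\lambda(z)$, where the sum runs over the distinct eigenvalues of $A$. We view this as an inner product of the vectors $(\lambda^{-1})_\lambda$ and $(a_\lambda(z))_\lambda$, indexed by the distinct eigenvalues. Cauchy--Schwarz then gives
\[
 |\hat x_z|^2\leq\Bigl(\sum_\lambda\lambda^{-2}\Bigr)\Bigl(\sum_\lambda|a_\lambda(z)|^2\Bigr)=\Bigl(\sum_\lambda\lambda^{-2}\Bigr)q^\infty(z).
\]
The denominator is positive because $A$ is invertible, so we may divide by it. This gives the first inequality of Eq.~\eqref{eq:support}.

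For the second inequality, we compare $\sum_\lambda\lambda^{-2}$, taken over distinct eigenvalues, with $\operatorname{tr}A^{-2}=\sum_\lambda m_\lambda\lambda^{-2}$, where $m_\lambda=\operatorname{rank}P_\lambda\geq1$ is the multiplicity. Since every term is nonnegative, the sum over distinct eigenvalues is at most the trace. A larger denominator gives a smaller lower bound, so the second inequality follows.

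The support statement is immediate from the bound. If $\hat x_z\neq0$, the right-hand side of Eq.~\eqref{eq:support} is strictly positive, hence $q^\infty(z)>0$. It is worth checking, though the theorem does not require it, that $q^\infty$ is the $T\to\infty$ time average of the outcome law. Expanding $|\langle z|_Fe^{-iAT}b\rangle|^2=\sum_{\lambda,\mu}e^{-i(\lambda-\mu)T}a_\lambda(z)\overline{a_\mu(z)}$, the cross terms with $\lambda\neq\mu$ average to zero. This is why the sum must run over distinct eigenvalues: the projectors $P_\lambda$ group degenerate eigenvectors, whose phases do not dephase against each other.

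The main point of care is exactly this use of spectral projectors onto distinct eigenvalues rather than individual eigenvectors. With projectors, Eq.~\eqref{eq:a} has one coefficient per distinct eigenvalue, and Cauchy--Schwarz yields the sharper constant $\sum_\lambda\lambda^{-2}$ before we relax it to the trace. Beyond this bookkeeping, no real obstacle is expected.
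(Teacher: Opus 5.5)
Your proof is correct and is essentially the paper's argument: Cauchy--Schwarz applied to $\hat x_z=\sum_\lambda a_\lambda(z)\lambda^{-1}$ over distinct eigenvalues, followed by the bound $\sum_\lambda\lambda^{-2}\leq\operatorname{tr}A^{-2}$, which holds because each distinct eigenvalue appears in the trace with multiplicity at least one. Your remark that $q^\infty$ is the long-time average of the outcome law is not needed for the theorem, but it is consistent with the paper's definition.
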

\begin{proof}
By Eq.~\eqref{eq:a} and the Cauchy--Schwarz inequality, $|\hat x_z|^2=|\sum_\lambda a_\lambda(z)\lambda^{-1}|^2\leq\sum_\lambda|a_\lambda(z)|^2\sum_\lambda\lambda^{-2}$. The sum over distinct eigenvalues is at most the trace of $A^{-2}$.
\end{proof}

The probability of $z$ in the single state $|b\rangle$ is $|\hat b_z|^2=|\sum_\lambda a_\lambda(z)|^2$. This sum can vanish by cancellation while $\hat x_z=\sum_\lambda a_\lambda(z)/\lambda$ does not, because the weights $1/\lambda$ break the cancellation. A classical selection of modes by the magnitude of $\hat b$ has the same deficiency. Section~\ref{sec:theorycheck} shows that this happens for the elliptic duct. For the dephased law, Eq.~\eqref{eq:support} gives
\begin{equation}
 q^\infty_{\min}(\mathcal E_\varepsilon)\geq\frac{\min_{z\in\mathcal E_\varepsilon}|\hat x_z|^2}{\operatorname{tr}A^{-2}},
 \label{eq:qmin-floor}
\end{equation}
so that Corollaries~\ref{cor:complexity} and~\ref{cor:highprob-galerkin} become a priori budgets that depend only on the solution profile and the spectrum. The floor depends on the dimension through $\operatorname{tr}A^{-2}$, which is dominated by the lowest eigenvalues for elliptic operators. For $A=aI$ the bound is exact.

\begin{corollary}[Constant coefficient]\label{cor:const}
If every basis vector $|z\rangle_F$ is an eigenvector of $A$, then $q(z)=q^\infty(z)=|\hat b_z|^2$ for all sampling times, and $\operatorname{supp}\hat x=\operatorname{supp}\hat b$.
\end{corollary}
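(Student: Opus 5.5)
If every basis vector $|z\rangle_F$ is an eigenvector of $A$, each one lies in a single eigenspace, so the spectral decomposition localizes index by index. We begin with this. Write $\lambda(z)$ for the eigenvalue with $A|z\rangle_F=\lambda(z)|z\rangle_F$. Since $P_\lambda$ is Hermitian and $P_\lambda|z\rangle_F=\delta_{\lambda,\lambda(z)}|z\rangle_F$, we obtain
\[
a_\lambda(z)=\langle z|_FP_\lambda b\rangle=\langle P_\lambda z|_F\, b\rangle=\delta_{\lambda,\lambda(z)}\,\hat b_z .
\]
Thus for each $z$ only one term of the sums in Eq.~\eqref{eq:a} survives.

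Next we compute the outcome law. From $e^{-iAT}=\sum_\lambda e^{-i\lambda T}P_\lambda$ we get
\[
\langle z|_Fe^{-iAT}|b\rangle=\sum_\lambda e^{-i\lambda T}a_\lambda(z)=e^{-i\lambda(z)T}\hat b_z .
\]
Its squared modulus is $|\hat b_z|^2$ for every $T$. Averaging over any set of sampling times, as in Eq.~\eqref{eq:q}, therefore gives $q(z)=|\hat b_z|^2$. The same single surviving term gives $q^\infty(z)=\sum_\lambda|a_\lambda(z)|^2=|\hat b_z|^2$.

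Finally, Eq.~\eqref{eq:a} gives $\hat x_z=a_{\lambda(z)}(z)/\lambda(z)=\hat b_z/\lambda(z)$. Because $A$ is invertible, $\lambda(z)\neq0$, so $\hat x_z=0$ exactly when $\hat b_z=0$. Hence $\operatorname{supp}\hat x=\operatorname{supp}\hat b$.

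The only point that needs care is the first step. We must justify that the basis vector $|z\rangle_F$ lies in exactly one eigenspace, so that no cancellation between different $\lambda$ can occur. This holds because eigenspaces belonging to distinct eigenvalues are orthogonal. The remaining steps are direct substitutions.
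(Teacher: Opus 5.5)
Your proof is correct and follows the argument the paper leaves implicit. The localization $a_\lambda(z)=\delta_{\lambda,\lambda(z)}\hat b_z$, substituted into Eq.~\eqref{eq:a}, makes every Krylov amplitude a pure phase times $\hat b_z$ and gives $\hat x_z=\hat b_z/\lambda(z)$ with $\lambda(z)\neq0$, from which all three claims follow directly.
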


For uniform times $T_k=k\Delta t$, $k=0,\dots,d-1$, let $D_d(\theta)=d^{-1}\sum_{k=0}^{d-1}e^{-ik\theta}$. Expanding the amplitudes in the eigenbasis gives the following lemma, proved in Appendix~\ref{app:dephasing}.

\begin{lemma}[Finite averaging]\label{lem:finite}
\begin{equation}
 |q(z)-q^\infty(z)|\leq\sum_{\lambda\neq\mu}|a_\lambda(z)||a_\mu(z)|\,\bigl|D_d\bigl((\lambda-\mu)\Delta t\bigr)\bigr|,\qquad
 |D_d(\theta)|\leq\min\Bigl\{1,\frac{1}{d|\sin(\theta/2)|}\Bigr\}.
 \label{eq:dirichlet}
\end{equation}
\end{lemma}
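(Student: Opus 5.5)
We will prove Lemma~\ref{lem:finite} by expanding the amplitudes in the eigenbasis, as announced before the statement. By the spectral decomposition $e^{-iAT}=\sum_\lambda e^{-i\lambda T}P_\lambda$, with $\lambda$ running over the distinct eigenvalues, each amplitude is
\[
 \langle z|_Fe^{-iAT_k}|b\rangle=\sum_\lambda e^{-i\lambda k\Delta t}a_\lambda(z).
\]
Squaring the modulus gives the double sum $\sum_{\lambda,\mu}a_\lambda(z)\overline{a_\mu(z)}\,e^{-i(\lambda-\mu)k\Delta t}$. We then average over $k=0,\dots,d-1$ as in Eq.~\eqref{eq:q}, using the uniform grid $T_k=k\Delta t$. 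By the definition of $D_d$, this yields
\[
 q(z)=\sum_{\lambda,\mu}a_\lambda(z)\overline{a_\mu(z)}\,D_d\bigl((\lambda-\mu)\Delta t\bigr).
\]
The diagonal terms $\lambda=\mu$ have $D_d(0)=1$, so they reproduce $q^\infty(z)=\sum_\lambda|a_\lambda(z)|^2$ exactly. The difference $q(z)-q^\infty(z)$ is therefore the off-diagonal sum. The triangle inequality then gives the first inequality of Eq.~\eqref{eq:dirichlet} immediately.

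For the bound on $|D_d|$ we argue in two steps.

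1. $|D_d(\theta)|\le1$ holds because $D_d$ is an average of unimodular numbers.
2. For $\theta\notin2\pi\mathbb Z$ we sum the geometric series, $D_d(\theta)=d^{-1}(1-e^{-id\theta})/(1-e^{-i\theta})$. We use $|1-e^{-i\phi}|=2|\sin(\phi/2)|$ and bound the numerator by $2$, which gives $|D_d(\theta)|\le 1/(d|\sin(\theta/2)|)$.

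At $\theta\in2\pi\mathbb Z$ the second expression is interpreted as $+\infty$, so the minimum is simply $1$ and remains valid. Taking the minimum of the two bounds completes the argument.

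There is no real obstacle here. The only points needing care are the following:

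1. The eigenvalues must be grouped into distinct values, so that the diagonal part equals $q^\infty$ exactly, matching its definition over distinct eigenvalues.
2. The degenerate case $\sin(\theta/2)=0$ for $\lambda\neq\mu$ (aliasing, when $(\lambda-\mu)\Delta t\in2\pi\mathbb Z$) must be handled. There only the trivial bound $1$ applies.

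The sign convention in $D_d$ does not matter, since $|D_d(-\theta)|=|D_d(\theta)|$.
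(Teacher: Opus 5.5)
Your proof is correct and follows essentially the same route as the paper's proof in Appendix~\ref{app:dephasing}. Both expand the amplitude over the spectral projectors, write $q(z)=\sum_{\lambda,\mu}a_\lambda(z)\overline{a_\mu(z)}\,D_d\bigl((\lambda-\mu)\Delta t\bigr)$, identify the diagonal terms with $q^\infty(z)$ via $D_d(0)=1$, apply the triangle inequality to the off-diagonal terms, and bound $|D_d|$ trivially and by the geometric sum; your remarks on grouping into distinct eigenvalues and on the aliasing case $\sin(\theta/2)=0$ make explicit what the paper leaves implicit.
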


Equation~\eqref{eq:dirichlet} shows that dephasing requires $d|\sin((\lambda-\mu)\Delta t/2)|\gg1$ for the eigenvalue pairs that carry the solution. The required window depends on the relevant eigenvalue differences and on aliasing, not on $\lambda_{\min}$ alone, and near-degenerate pairs may require windows much longer than $2\pi/\lambda_{\min}$. The step $\Delta t=\pi/\lambda_{\max}$, which we use as the reference short step, gives a total time that is shorter than $2\pi/\lambda_{\min}$ by a factor of about $2\kappa/d$. A single grid can fail completely: if $(\lambda-\mu)\Delta t$ is a multiple of $2\pi$ for every eigenvalue pair with $a_\lambda(z)a_\mu(z)\neq0$, then $D_d=1$ for these pairs and $q(z)=|\hat b_z|^2$, so a mode hidden by cancellation in $\hat b$ stays hidden. Incommensurate or varied time steps avoid this failure.

\subsection{Schedule mixtures and a multiscale heuristic}\label{sec:times}
A long uniform step aliases the high eigenvalues, while a short step does not dephase the low ones. An explicit mixture of candidate schedules has the following coverage property.

\begin{lemma}[Schedule mixtures]\label{lem:hedging}
Let $\mathcal T_1,\dots,\mathcal T_r$ be complete time schedules with outcome laws $q^{(1)},\dots,q^{(r)}$, and allocate equal total shots to each schedule. The outcome law of the mixture satisfies
\[
 q^{\mathrm{mix}}=\frac1r\sum_{j=1}^rq^{(j)}\geq\frac{q^{(j)}}{r}\qquad\text{for every }j.
\]
\end{lemma}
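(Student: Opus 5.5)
The plan is to write the mixture law directly from the definition of the outcome law in Eq.~\eqref{eq:q} and then drop nonnegative terms. Let schedule $\mathcal T_j$ consist of $d_j$ times, and give it total shots $M/r$, split evenly over its times. A single shot of the mixture then lands on schedule $j$ with weight $1/r$ and on a given time $T\in\mathcal T_j$ with weight $1/(r d_j)$. Hence the empirical distribution of preparations is the uniform mixture over schedules of each schedule's own uniform time average. This gives
\[
 q^{\mathrm{mix}}(z)=\frac1r\sum_{j=1}^r\frac1{d_j}\sum_{T\in\mathcal T_j}\bigl|\langle z|_Fe^{-iAT}|b\rangle\bigr|^2=\frac1r\sum_{j=1}^rq^{(j)}(z).
\]
If the schedules carry prescribed unequal shot counts internally, the same holds with $q^{(j)}$ as defined for that schedule's weights, since $q^{(j)}$ is by definition the shot-weighted average over its own times.

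The inequality follows because each $q^{(i)}(z)$ is an average of squared moduli and is therefore nonnegative. Discarding all terms with $i\neq j$ gives $q^{\mathrm{mix}}(z)\geq q^{(j)}(z)/r$ for every $z$ and every $j$.

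The only step needing care is to interpret $q^{\mathrm{mix}}$ correctly for the shot-level guarantees used earlier. The proof of Theorem~\ref{thm:error} uses the per-index exponent $\sum_kM_kp_k(z)$, not the law of a single random shot. I will therefore check that the exponent for the mixture is $M q^{\mathrm{mix}}(z)$: with $M_k=M/(r d_j)$ on time $T_k\in\mathcal T_j$,
\[
 \sum_kM_kp_k(z)=\frac Mr\sum_jq^{(j)}(z)=Mq^{\mathrm{mix}}(z).
\]
This is the unequal-shot remark after Theorem~\ref{thm:error}. With this check, the pointwise bound transfers directly to Eq.~\eqref{eq:miss}, so the mixture costs at most a factor $r$ in the budgets of Corollaries~\ref{cor:complexity} and~\ref{cor:highprob-galerkin} relative to the best schedule. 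I do not expect a genuine obstacle here; the statement is essentially bookkeeping.
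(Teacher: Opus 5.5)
Your proposal is correct and takes the same approach as the paper. The paper gives no separate proof and treats the lemma as immediate: the mixture law is the shot-weighted average of the constituent laws, and dropping the nonnegative terms with $i\neq j$ gives the bound. Your exponent check $\sum_k M_k p_k(z)=Mq^{\mathrm{mix}}(z)$ counts each (schedule, time) pair separately, which is the paper's convention that times shared by several constituents are kept with multiplicity, and it gives the factor-$r$ budget comparison the paper states after the lemma.
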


By Corollary~\ref{cor:complexity}, the sufficient budget of the mixture is at most $r$ times that of its best constituent for the same essential set. This compares sufficient bounds, not observed stopping times. Times that occur in several constituents are kept with multiplicity. Algorithm~\ref{alg:times} instead selects isolated, logarithmically spaced times. It does not contain the candidate fixed-step schedules, and Lemma~\ref{lem:hedging} does not apply to it. We use it as an empirical heuristic. Its lower time scale $\pi/\lambda_{\max}$ avoids aliasing of the highest eigenvalues, and its upper scale $2\pi/\lambda_{\min}$ resolves the lowest eigenvalues, which dominate $\hat x$ through the weights $1/\lambda$ of Eq.~\eqref{eq:a}.

\section{Geometric formulation}\label{sec:geometry}
In this section we formulate the solver for elliptic problems on a general Riemannian cross section. We first state the geometric method as a sequence of steps. We then show that the guarantees of Sec.~\ref{sec:galerkin-theory} apply to the resulting matrices and classify the approximations that a realization requires. An interpretation of the sampling in terms of Green's functions is given in Appendix~\ref{app:supp}. Elliptic, lens-shaped, and curved ducts appear at the end of the section as concrete examples.

\subsection{The geometric method}\label{sec:geometric-method}
Let $\Omega$ be a cross section, and let $\Phi:R\to\Omega$ be a coordinate map from a simple coordinate domain $R$, such as a rectangle, a strip, or a quarter rectangle. The map induces a metric $g_{ij}$ on $R$. The geometric method consists of five steps.
\begin{enumerate}
\item \emph{Coordinates.} Choose $\Phi$ such that the wall of $\Omega$ is mapped to coordinate lines of $R$. Conformal maps are preferred in two dimensions (Proposition~\ref{prop:conformal}), and orthogonal tube coordinates in curved vessels.
\item \emph{Weak form.} Write the physical problem in the invariant weak form~\eqref{eq:weak}. The metric enters only through $\sqrt g$ and $g^{ij}$.
\item \emph{Reference operator and basis.} Split the discrete matrix as $A=A_0+V$, where $A_0$ is a reference operator whose eigenbasis $F$ has an efficient quantum circuit. For a rectangle in conformal coordinates with a constant stiffness coefficient, $A_0$ is the flat stiffness matrix and $F$ is a sine or cosine transform. All mode coupling is then contained in $V$.
\item \emph{Sampling and reconstruction.} Measure Krylov states in the basis $F$ (Sec.~\ref{sec:basis}) with a schedule of Sec.~\ref{sec:support} or~\ref{sec:times}, apply a filter that encodes the symmetries of $\Omega$, and reconstruct by the Galerkin solve of Eq.~\eqref{eq:galerkin}.
\item \emph{Approximation.} Replace the exact metric, domain, or circuit by approximations of Class I or Class II (Sec.~\ref{sec:hierarchy}).
\end{enumerate}
Steps 1--3 define a discrete model, which already includes discretization and any coordinate truncation. Step~5 separates further model and sampler approximations. Physical accuracy requires control of all these contributions.

\subsection{Elliptic problems on a Riemannian cross section}
Let $\Omega$ have local coordinates $(u^1,u^2)$ and metric $g_{ij}$, and let $\sqrt g=\sqrt{\det g_{ij}}$. We consider the weak problem: find $w$ with $w=0$ on the wall such that
\begin{equation}
 \int_R\bigl(\varrho\,g^{ij}\partial_iw\,\partial_jv+\gamma\,wv\bigr)\sqrt g\,du^1du^2=\int_Rf\,v\,\sqrt g\,du^1du^2
 \label{eq:weak}
\end{equation}
for all admissible $v$. Here $\varrho>0$ and $\gamma\geq0$ are coefficient fields. Equation~\eqref{eq:weak} is the weak form of $-\nabla_g\cdot(\varrho\nabla_gw)+\gamma w=f$, where $\nabla_g$ denotes the Levi-Civita gradient and divergence~\cite{Nakahara2003}. A conservative discretization on a coordinate grid gives
\begin{equation}
\begin{aligned}
 A&=K+C,\qquad K=\nabla_h^{T}W\nabla_h,\\
 W&=\operatorname{blockdiag}\bigl(\varrho\sqrt g\,[g^{ij}]\bigr),\\
 C&=\operatorname{diag}(\gamma\sqrt g),\qquad b=\operatorname{diag}(\sqrt g)f.
\end{aligned}
 \label{eq:KC}
\end{equation}
where $\nabla_h$ is the discrete gradient. For an orthogonal coordinate system $[g^{ij}]$ is diagonal and $K$ is a symmetric five-point stencil.

\begin{proposition}[Structure]\label{prop:structure}
The matrix $A$ of Eq.~\eqref{eq:KC} is real symmetric positive definite whenever $\varrho>0$, $\gamma\geq0$, and the wall carries a Dirichlet condition. Therefore the results of Sec.~\ref{sec:galerkin-theory} apply to every problem of the form~\eqref{eq:weak}, for any metric and any orthogonal measurement basis.
\end{proposition}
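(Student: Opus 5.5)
We prove the three properties directly from the factorization in Eq.~\eqref{eq:KC}, and then note why the results of Sec.~\ref{sec:galerkin-theory} carry over. \emph{Realness and symmetry} are structural. The discrete gradient $\nabla_h$ is a real matrix. The weight $W$ is block diagonal, and each block $\varrho\sqrt g\,[g^{ij}]$ is real symmetric because the inverse metric is symmetric. Hence $K=\nabla_h^TW\nabla_h$ is real symmetric. The matrix $C=\operatorname{diag}(\gamma\sqrt g)$ is real diagonal, so $A=K+C$ is real symmetric.

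\emph{Positive semidefiniteness.} For real $w$ we have
\[
 w^TAw=(\nabla_hw)^TW(\nabla_hw)+\sum\gamma\sqrt g\,w^2 .
\]
Each block of $W$ is positive definite, because $g_{ij}$ is a Riemannian metric (so $[g^{ij}]$ is positive definite), $\sqrt g>0$, and $\varrho>0$. The second term is nonnegative because $\gamma\geq0$. Hence $w^TAw\geq0$, and $w^TAw=0$ forces $\nabla_hw=0$. If the metric is only evaluated at the quadrature points where $W$ is sampled, the same statement holds pointwise at those points.

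\emph{Definiteness, the main step.} We must show that $\nabla_hw=0$ implies $w=0$, and this is where the Dirichlet condition enters. The unknowns are the interior nodes, and the wall values are fixed to zero and eliminated. The discrete gradient consists of forward differences along coordinate lines, including the differences between boundary-adjacent nodes and the zero wall values. On a connected coordinate grid $R$, the condition $\nabla_hw=0$ makes $w$ constant along every coordinate line. Each such line ends at a Dirichlet wall node with value zero, so the constant is zero. The hypothesis that the wall of $\Omega$ maps to coordinate lines of $R$ (step~1 of Sec.~\ref{sec:geometric-method}) guarantees that every line reaches the wall. For domains such as a quarter rectangle with symmetry (Neumann) sides, we argue instead by connectivity: the grid graph is connected and at least one Dirichlet edge exists, so $w$ is constant and equal to zero. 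This step is the only place where care is needed. We will state it as requiring that each connected component of the grid graph touches the Dirichlet wall; this holds for all the coordinate domains considered. With $\gamma>0$ somewhere the conclusion is immediate on that component as well.

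\emph{Transfer to Sec.~\ref{sec:galerkin-theory}.} Lemma~\ref{lem:galerkin}, Theorems~\ref{thm:error} and~\ref{thm:support}, and the corollaries use only that $A$ is Hermitian positive definite with $\|b\|=1$; we rescale $b$ if necessary, and the relative errors are unchanged. By Sec.~\ref{sec:basis}, the transformed matrix $\hat A=FAF^\dagger$ is Hermitian positive definite for any unitary $F$, and every principal submatrix of $\hat A$ is also positive definite. The Galerkin system \eqref{eq:galerkin} is therefore uniquely solvable, and all bounds hold for every metric and every orthogonal (indeed unitary) measurement basis.
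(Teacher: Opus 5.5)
Your proof is correct. The paper states this proposition without proof, and your argument supplies the reasoning it leaves implicit: $K=\nabla_h^{T}W\nabla_h\succeq0$ with $W\succ0$ and $C\succeq0$, the Dirichlet condition removes the constant kernel of $\nabla_h$, and unitary invariance (Sec.~\ref{sec:basis}) carries the Galerkin theory over to any measurement basis.

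One intermediate claim is false as stated: you say that every coordinate line reaches the wall. On the quarter-ellipse grid, the $\nu$-lines run between two Neumann symmetry lines and never reach the Dirichlet wall. The step that actually carries definiteness is your connectivity hypothesis (every connected component of the grid graph contains a Dirichlet edge), or equivalently the observation that every node lies on a $\mu$-line that ends at the wall.
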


The implicit Euler step of the unsteady Stokes equation, Eq.~\eqref{eq:step} below, is the case $\varrho=1$, $\gamma=\Gamma$, and $f=\Gamma w^n+G^{n+1}$.

\subsection{Conformal reduction in two dimensions}
\begin{proposition}[Conformal reduction]\label{prop:conformal}
If the coordinates are conformal, $g_{ij}=h^2\delta_{ij}$, then $\sqrt g\,g^{ij}=\delta^{ij}$. As a result, $K$ is the flat stiffness matrix with coefficient $\varrho$, independent of the metric, and the geometry enters only through $C=\operatorname{diag}(\gamma h^2)$ and $b=\operatorname{diag}(h^2)f$.
\end{proposition}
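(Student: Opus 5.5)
The plan is a direct computation followed by substitution into Eq.~\eqref{eq:KC}. In two dimensions a conformal metric has $g_{ij}=h^2\delta_{ij}$, so $\det g_{ij}=h^4$ and $\sqrt g=h^2$. Here we use $h>0$, which holds because $\Phi$ is a nondegenerate coordinate map. The inverse metric is $g^{ij}=h^{-2}\delta^{ij}$. Multiplying gives $\sqrt g\,g^{ij}=h^2\cdot h^{-2}\delta^{ij}=\delta^{ij}$. This is the first claim. It is specific to dimension two: in dimension $n$ one obtains $h^{n-2}\delta^{ij}$, and this is the only place the dimension enters.

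Next we substitute into the discrete operator. In Eq.~\eqref{eq:KC} the weight is $W=\operatorname{blockdiag}(\varrho\sqrt g\,[g^{ij}])$. The metric enters only through the combination $\sqrt g\,[g^{ij}]$ evaluated at the flux points. By the first step this combination is the identity, so $W=\operatorname{blockdiag}(\varrho I)$. Then $K=\nabla_h^T W\nabla_h$ is exactly the stiffness matrix of $-\nabla\cdot(\varrho\nabla w)$ on the flat coordinate domain $R$ with the same grid. It contains no trace of $h$. The zeroth-order and source terms are $C=\operatorname{diag}(\gamma\sqrt g)=\operatorname{diag}(\gamma h^2)$ and $b=\operatorname{diag}(\sqrt g)f=\operatorname{diag}(h^2)f$, which completes the statement.

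The main point requiring care is the discrete level, not the continuum identity. The claim that $K$ is metric-independent needs the discretization to sample the product $\sqrt g\,g^{ij}$ as a single coefficient at each flux location, which is how Eq.~\eqref{eq:KC} is written. If one instead discretized $\sqrt g$ and $g^{ij}$ at different points, for instance $\sqrt g$ at cell centres and $g^{ij}$ at faces, the cancellation would hold only up to $O(h_{\mathrm{grid}}^2)$. I would therefore state explicitly that $W$ is assembled from the pointwise product, so the cancellation is exact for every grid point. I would also note that the boundary treatment is unaffected. The Dirichlet wall is mapped to coordinate lines of $R$ by Step~1 of Sec.~\ref{sec:geometric-method}, so the discrete gradient $\nabla_h$ and its boundary rows are those of the flat rectangle.
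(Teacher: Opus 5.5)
Your proposal is correct and follows the paper's proof. You compute $\sqrt g=h^2$ and $g^{ij}=h^{-2}\delta^{ij}$, so their product is $\delta^{ij}$, and then read off $K$, $C$, and $b$ from Eq.~\eqref{eq:KC}; your added remark that $W$ must be assembled from the pointwise product $\sqrt g\,g^{ij}$ for the cancellation to be exact at the discrete level is a sensible clarification that the paper leaves implicit.
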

\begin{proof}
In two dimensions $\sqrt g=h^2$ and $g^{ij}=h^{-2}\delta^{ij}$, so their product is $\delta^{ij}$.
\end{proof}

Proposition~\ref{prop:conformal} is the discrete form of the conformal invariance of the Dirichlet energy in two dimensions. When the coordinate domain is a rectangle and $\varrho=1$, a sine or cosine basis diagonalizes $K$ exactly, and all mode coupling comes from the metric factor $h^2$ through $C$. A proper simply connected planar domain admits a conformal map to the disk. Mapping a physical boundary to a rectangle or a strip requires appropriate boundary treatment, and corners, narrow features, and mapping singularities can affect numerical conditioning. Patient-specific cross sections segmented from images can be treated by a numerical conformal map~\cite{Driscoll2002,Marshall2007}, which replaces the analytic $h^2$ of the examples below by a computed one.

\subsection{Symmetries and the Lie derivative}\label{sec:symmetry}
Step~4 of the geometric method applies a filter that removes indices at which the solution vanishes. The Lie derivative gives a systematic way to find such indices. A vector field $X$ on $\Omega$ is a Killing field if $\mathcal L_Xg=0$, where $\mathcal L_X$ denotes the Lie derivative (Appendix~\ref{app:riemann}). Its flow is then a one-parameter group of isometries. A reflection that maps $\Omega$ onto itself and preserves $g$ is a discrete isometry.

\begin{proposition}[Symmetry selection rule]\label{prop:symmetry}
(i) Let $X$ be a Killing field that is tangent to the wall and has a complete flow on $\Omega$, and assume $\mathcal L_X\varrho=\mathcal L_X\gamma=0$. Then $\mathcal L_X$ commutes with the solution operator of Eq.~\eqref{eq:weak}. In particular, if $\mathcal L_Xf=0$, then $\mathcal L_Xw=0$, and if $-i\mathcal L_Xf=mf$, then $-i\mathcal L_Xw=mw$.

(ii) Let $P$ be a unitary matrix with $PA=AP$ and $Pb=\chi b$. Then every Krylov state $e^{-iAT}b$ and the solution $x^*$ are eigenvectors of $P$ with eigenvalue $\chi$. If the measurement basis $F$ diagonalizes $P$, with $FPF^\dagger=\operatorname{diag}(\chi_z)$, then $q(z)=0$ and $\hat x_z=0$ for every index with $\chi_z\neq\chi$.
\end{proposition}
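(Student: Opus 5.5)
For part (i), the plan is to work with the flow $\phi_s$ of $X$ and show that the weak problem~\eqref{eq:weak} is invariant under pullback by $\phi_s$. Since $X$ is Killing, $\phi_s^*g=g$. The flow therefore preserves $\sqrt g\,du^1du^2$, the volume form, and the pairing $g^{ij}\partial_iw\,\partial_jv$. Because $\mathcal L_X\varrho=\mathcal L_X\gamma=0$, the coefficients satisfy $\phi_s^*\varrho=\varrho$ and $\phi_s^*\gamma=\gamma$. Because $X$ is tangent to the wall and the flow is complete, $\phi_s$ maps $\Omega$ onto itself and the wall onto the wall, so it preserves the Dirichlet condition and the space of admissible test functions.

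Denote the bilinear form by $a(w,v)$. The invariances above give $a(\phi_s^*w,\phi_s^*v)=a(w,v)$ and $\ell_{\phi_s^*f}(\phi_s^*v)=\ell_f(v)$. Since $v\mapsto\phi_s^*v$ is a bijection of the test space, $\phi_s^*w$ solves the problem with datum $\phi_s^*f$. By uniqueness, which follows from coercivity as in Proposition~\ref{prop:structure}, the solution operator $\mathcal G$ satisfies $\mathcal G\phi_s^*=\phi_s^*\mathcal G$. Differentiating at $s=0$ gives $\mathcal G\mathcal L_X=\mathcal L_X\mathcal G$ on smooth data. Both conclusions then follow:
\begin{itemize}
\item if $\mathcal L_Xf=0$, then $\mathcal L_Xw=\mathcal G\mathcal L_Xf=0$;
\item if $-i\mathcal L_Xf=mf$, then $-i\mathcal L_Xw=\mathcal G(-i\mathcal L_Xf)=m\,\mathcal Gf=mw$.
\end{itemize}

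The main obstacle is the differentiation step. We need $\mathcal L_X w$ to make sense and to lie in the admissible space. I would handle this in one of two ways. The first is to assume enough regularity of $f$, in which case $\phi_s^*w$ is differentiable in $s$ in $H^1$. The second is to avoid derivatives altogether: state the result for the one-parameter group, and obtain the eigenvalue statement from $\phi_s^*f=e^{ims}f$, which gives $\phi_s^*w=e^{ims}w$.

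For part (ii), $PA=AP$ implies that $P$ commutes with every function of $A$, in particular with $e^{-iAT}$ and with $A^{-1}$. Hence $Pe^{-iAT}b=e^{-iAT}Pb=\chi e^{-iAT}b$, and likewise $Px^*=\chi x^*$. Now apply $F$ and write $\hat\psi=F\psi$, where $\psi$ is a Krylov state or $x^*$. Then $\operatorname{diag}(\chi_z)\hat\psi=FPF^\dagger F\psi=\chi\hat\psi$, so $(\chi_z-\chi)\hat\psi_z=0$. Therefore $\hat\psi_z=0$ whenever $\chi_z\neq\chi$. Applied to each Krylov state this gives $p_k(z)=0$ and hence $q(z)=0$ by Eq.~\eqref{eq:q}. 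Applied to $x^*$ it gives $\hat x_z=0$. This part is routine.
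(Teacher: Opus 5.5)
Your proposal is correct and follows the paper's proof essentially step for step. For (i) you use invariance of the bilinear form and the load under the isometric flow $\phi_s$, then uniqueness to get $\phi_s^*w=\mathcal G\phi_s^*f$, then differentiation at $s=0$; for (ii) you use commutation of $P$ with $e^{-iAT}$ and $A^{-1}$ and read off components in the eigenbasis of $FPF^\dagger$. Your extra care about the differentiation step (regularity, or the group form $\phi_s^*f=e^{ims}f\Rightarrow\phi_s^*w=e^{ims}w$) and the explicit passage from $p_k(z)=0$ to $q(z)=0$ fill in details that the paper leaves implicit.
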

\begin{proof}
(i) Let $\phi_s$ be the flow of $X$. Since $\phi_s$ is an isometry that maps $\Omega$ and its wall onto themselves and preserves $\varrho$ and $\gamma$, the bilinear form of Eq.~\eqref{eq:weak} satisfies $a(\phi_s^*w,\phi_s^*v)=a(w,v)$, and similarly for the load. Hence $\phi_s^*w$ is the solution for the load $\phi_s^*f$. By differentiating at $s=0$, we obtain that $\mathcal L_Xw$ is the solution for the load $\mathcal L_Xf$. The two statements follow from the linearity and uniqueness of the solution.
(ii) Since $P$ commutes with $A$, it commutes with $e^{-iAT}$ and with $A^{-1}$. Hence $Pe^{-iAT}b=\chi e^{-iAT}b$ and $Px^*=\chi x^*$. For a vector $v$ with $Pv=\chi v$, we have $\chi_z(Fv)_z=(FPF^\dagger Fv)_z=\chi(Fv)_z$, so $(Fv)_z=0$ when $\chi_z\neq\chi$.
\end{proof}

Part (ii) is the discrete form of part (i). It requires a grid that is mapped onto itself by the isometry, so that $P$ is a permutation of grid points. Part (ii) also shows that a filter that keeps the indices with $\chi_z=\chi$ satisfies the requirement of Algorithm~\ref{alg:solver}: it removes only shots that the ideal sampler never produces and keeps every coefficient that can be nonzero. The parity filter of the lens (Sec.~\ref{sec:ducts}) is this rule for the two reflections $\sigma\to2\pi-\sigma$ and $\tau\to-\tau$. For the disk, the Killing field $\partial_\vartheta$ restricts a rotationally symmetric load to the modes with zero angular momentum. Conversely, if a coefficient field is not invariant, the selection rule is lost. In a curved vessel the coefficient $\varrho=H$ is not invariant under one of the reflections, and Sec.~\ref{sec:geomresults} shows that the number of nonzero modes doubles.

\subsection{Curved and twisted vessels}\label{sec:curved}
For a vessel whose centerline has curvature $\kappa_c$ and torsion $\tau_c$, we use the orthogonal tube coordinates of Germano~\cite{Germano1982}. In these coordinates the metric is orthogonal, and the effect of torsion on the axial flow appears only at second order~\cite{Germano1982}. Let $H=1-\kappa_c\rho\cos\vartheta$ be the metric factor along the centerline, where $(\rho,\vartheta)$ are polar coordinates in the cross section and $H>0$ is required throughout the tube. In the Stokes limit, with the secondary flow neglected, we use the reduced scalar model of the form~\eqref{eq:weak} with
\begin{equation}
 \varrho=H,\qquad \gamma=\Gamma H+\frac{\kappa_c^2}{H}.
 \label{eq:curved}
\end{equation}
The secondary (Dean) flow~\cite{Dean1927} enters through the convective terms. In an implicit--explicit scheme these terms are moved to the right-hand side, so that the matrix remains of the form~\eqref{eq:KC}. Unlike the conformal case, the curvature enters the stiffness matrix through $\varrho=H$. By Proposition~\ref{prop:structure}, the theory applies to this positive-definite reduced model. Its reaction term and load conventions require independent validation against the full vector Stokes equations before quantitative curved-vessel predictions are claimed. Torsion and secondary-flow effects are not validated by the tests of Sec.~\ref{sec:geomresults}.

\subsection{Spectral bounds from the geometry}\label{sec:spectral-bounds}
The time schedules of Sec.~\ref{sec:support} and the residual certificate of Eq.~\eqref{eq:residual-certificate} require the bounds $a$ and $B$ of Eq.~\eqref{eq:spectrum}. The following proposition supplies them from the geometry alone.

\begin{proposition}[Spectral bounds]\label{prop:spectral}
Let $C$ be diagonal with entries in $[c_{\min},c_{\max}]$. Then
\begin{equation}
 \lambda_{\min}(K)+c_{\min}\leq\lambda_{\min}(A)\leq\lambda_{\max}(A)\leq\lambda_{\max}(K)+c_{\max}.
 \label{eq:spectral}
\end{equation}
For conformal coordinates on a rectangle of side lengths $L_1,L_2$ with $N$ interior points per direction and $\varrho=1$, the eigenvalues of $K$ are $\sum_{j=1,2}(4/\Delta_j^2)\sin^2(\pi k_j/(2N+2))$ with $\Delta_j=L_j/(N+1)$.
\end{proposition}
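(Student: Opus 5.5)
The first inequality chain follows from Weyl's inequalities, equivalently from the Rayleigh quotient. Since $A=K+C$ with both matrices real symmetric, for any unit vector $v$ we have $\langle v,Av\rangle=\langle v,Kv\rangle+\langle v,Cv\rangle$. Because $C$ is diagonal with entries in $[c_{\min},c_{\max}]$, the second term satisfies $c_{\min}\le\langle v,Cv\rangle\le c_{\max}$. The first term lies in $[\lambda_{\min}(K),\lambda_{\max}(K)]$. Taking the minimum and maximum over unit $v$ (Courant--Fischer) gives Eq.~\eqref{eq:spectral}. Proposition~\ref{prop:structure} already ensures that $A$ and $K$ are symmetric, so the variational characterization applies.

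For the explicit eigenvalues, I would invoke Proposition~\ref{prop:conformal} with $\varrho=1$. It gives $W=I$, so $K=\nabla_h^T\nabla_h$ is the flat Dirichlet five-point Laplacian on an $N\times N$ interior grid with spacings $\Delta_1,\Delta_2$. This operator separates as a Kronecker sum $K=T_1\otimes I+I\otimes T_2$, where $T_j=\Delta_j^{-2}\,\mathrm{tridiag}(-1,2,-1)$ is of size $N$. This identification should be checked from the conservative discretization: the forward-difference gradient has $N+1$ edges per line with Dirichlet zeros at both ends, and $D^TD$ reproduces the tridiagonal matrix.

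The eigenvectors of $\mathrm{tridiag}(-1,2,-1)$ are the discrete sines $s_k(m)=\sin(\pi km/(N+1))$, $k=1,\dots,N$. Substituting them and using $\sin((m-1)\theta)+\sin((m+1)\theta)=2\cos\theta\sin(m\theta)$ gives the eigenvalue $2-2\cos(\pi k/(N+1))=4\sin^2(\pi k/(2N+2))$. The boundary rows are consistent because $s_k(0)=s_k(N+1)=0$. The eigenvalues of a Kronecker sum are the sums of the factor eigenvalues, with product eigenvectors $s_{k_1}\otimes s_{k_2}$. This gives the stated formula and shows that the sine transform diagonalizes $K$, consistent with Step~3 of the geometric method.

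The only delicate point, which I expect to be the main obstacle, is matching the discrete gradient and Dirichlet convention to this tridiagonal form, in particular the exact $N+1$ in $\Delta_j$ and in the sine argument. Once $K$ is identified as the Kronecker sum above, the remaining steps are routine.
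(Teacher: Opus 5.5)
Your argument is correct and matches the paper's proof of the inequality chain, which is the same Rayleigh-quotient splitting $\langle v,Av\rangle=\langle v,Kv\rangle+\langle v,Cv\rangle$ with each term bounded by the extreme eigenvalues of its matrix. The paper gives no proof of the eigenvalue formula for $K$ and treats it as the standard spectrum of the Dirichlet five-point Laplacian; your Kronecker-sum and discrete-sine derivation supplies that step correctly, using $W=I$ from Proposition~\ref{prop:conformal} and Dirichlet zeros at both ends of each grid line.
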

\begin{proof}
For any unit vector $v$, $\langle v,Av\rangle=\langle v,Kv\rangle+\langle v,Cv\rangle$, and both terms lie between the extreme eigenvalues of the respective matrix.
\end{proof}

For the lens of Sec.~\ref{sec:ducts} with $N=32$ and $\Gamma=100$, Eq.~\eqref{eq:spectral} gives $a=2.32\leq\lambda_{\min}=3.07$ and $\lambda_{\max}=1045\leq B=1062$. For the disk it gives $1.07\leq1.83$ and $509\leq552$. These certified values can be used directly in Eq.~\eqref{eq:spectrum}. Weyl's law~\cite{Chavel1984} states that the number of eigenvalues below $\lambda$ grows as $|\Omega|_g\lambda/(4\pi)$, where $|\Omega|_g$ is the area. It is a high-frequency asymptotic statement and does not certify low-lying gaps or a finite dephasing window.

\subsection{Approximation hierarchy}\label{sec:hierarchy}
A realization on hardware requires approximations. We divide them into two classes.

\paragraph{Class I: approximations of the operator.} These replace $A$ by a nearby matrix $\tilde A$. Examples are the error of a numerical conformal map, the truncation of an infinite coordinate strip, the neglect of higher orders in the curvature, and the penalization of a domain on a bounding box. They change the solution that the algorithm computes.

\begin{proposition}[Metric perturbation]\label{prop:perturb}
Let $\tilde A=A+\delta A$ with $|\langle v,\delta A\,v\rangle|\leq\varepsilon\langle v,Av\rangle$ for all $v$ and some $\varepsilon<1$, and let $\tilde b=b+\delta b$ and $\tilde x=\tilde A^{-1}\tilde b$. Then
\begin{equation}
 \frac{\|\tilde x-x^*\|_A}{\|x^*\|_A}\leq\frac{\varepsilon+\|\delta b\|_{A^{-1}}/\|x^*\|_A}{1-\varepsilon}.
 \label{eq:perturb}
\end{equation}
In particular, a relative error $|\delta(\sqrt g)|\leq\varepsilon\sqrt g$ of the metric factor in $C$, or $|\delta\varrho|\leq\varepsilon\varrho$ in $K$, satisfies the assumption on $\delta A$.
\end{proposition}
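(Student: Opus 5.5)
The plan is to write the error equation for $e=\tilde x-x^*$ and test it against $e$ itself. Since $\tilde A\tilde x=\tilde b$ and $Ax^*=b$, we have
\[
 \tilde A e=\tilde b-\tilde A x^*=\delta b-\delta A\,x^* .
\]
Taking the inner product with $e$ gives $\langle e,\tilde A e\rangle=\langle e,\delta b\rangle-\langle e,\delta A\,x^*\rangle$. I will bound the left side from below and each term on the right from above, all in the energy norm $\|\cdot\|_A$.

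\emph{Lower bound.} The hypothesis gives $\langle e,\tilde A e\rangle\geq(1-\varepsilon)\|e\|_A^2$. In particular $\tilde A$ is positive definite, so $\tilde x$ is well defined.

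\emph{Load term.} The Cauchy--Schwarz inequality in the pair $A^{1/2}$, $A^{-1/2}$ gives $|\langle e,\delta b\rangle|\leq\|e\|_A\|\delta b\|_{A^{-1}}$.

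\emph{Operator term.} Here I will assume that $\delta A$ is Hermitian, which holds for every perturbation in the paper, since they all keep the matrix of the symmetric form~\eqref{eq:KC}. Then $E=A^{-1/2}\delta A\,A^{-1/2}$ is Hermitian, and the hypothesis says that its numerical range lies in $[-\varepsilon,\varepsilon]$. Hence $\|E\|\leq\varepsilon$, and
\[
 |\langle e,\delta A\,x^*\rangle|=|\langle A^{1/2}e,E\,A^{1/2}x^*\rangle|\leq\varepsilon\|e\|_A\|x^*\|_A .
\]
Combining the three bounds and dividing by $(1-\varepsilon)\|e\|_A\|x^*\|_A$ (the case $e=0$ is trivial) yields Eq.~\eqref{eq:perturb}.

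I expect this operator-term step to be the only real obstacle. The hypothesis controls only the quadratic form of $\delta A$, and converting it into a bilinear bound needs the symmetry of $\delta A$. If $\delta A$ were not Hermitian, the passage from the numerical range to the operator norm would fail, and I would have to restrict the statement accordingly.

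\emph{Metric factor in $C$.} A relative error $|\delta(\sqrt g)|\leq\varepsilon\sqrt g$ perturbs $C$ by a diagonal matrix $\delta C$ with $|\delta C_{ii}|\leq\varepsilon C_{ii}$, since $\gamma\geq0$. Hence
\[
 |\langle v,\delta C\,v\rangle|\leq\varepsilon\langle v,Cv\rangle\leq\varepsilon\langle v,Av\rangle,
\]
where the last step uses $K\geq0$. The simultaneous change of the load, $b=\operatorname{diag}(\sqrt g)f$, is not part of $\delta A$; it is absorbed into the term $\delta b$ of Eq.~\eqref{eq:perturb}.

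\emph{Coefficient in $K$.} A relative error $|\delta\varrho|\leq\varepsilon\varrho$ gives $\delta K=\nabla_h^T\,\delta W\,\nabla_h$. Each block of $\delta W$ is $\delta\varrho\sqrt g\,[g^{ij}]$, which is a scalar multiple of the positive semidefinite block $\varrho\sqrt g\,[g^{ij}]$ with ratio in $[-\varepsilon,\varepsilon]$. Therefore $|\langle w,\delta W w\rangle|\leq\varepsilon\langle w,Ww\rangle$ blockwise. Setting $w=\nabla_h v$ gives
\[
 |\langle v,\delta K\,v\rangle|\leq\varepsilon\langle v,Kv\rangle\leq\varepsilon\langle v,Av\rangle,
\]
where the last step uses $C\geq0$. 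Both perturbations are real symmetric, so the Hermitian assumption used in the operator-term step is satisfied.
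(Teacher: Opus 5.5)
Your proof is correct and gives exactly the constant of Eq.~\eqref{eq:perturb}, but it takes a different route from the paper. The paper writes $\tilde A=A^{1/2}(I+\mathcal E)A^{1/2}$ with $\mathcal E=A^{-1/2}\delta A\,A^{-1/2}$. It then derives the explicit representation $A^{1/2}(\tilde x-x^*)=-(I+\mathcal E)^{-1}\mathcal E\,A^{1/2}x^*+(I+\mathcal E)^{-1}A^{-1/2}\delta b$ and bounds the two operator norms by $\varepsilon/(1-\varepsilon)$ and $1/(1-\varepsilon)$. You instead use a Strang-type energy estimate: test the error equation with $e$, use coercivity of $\tilde A$ in $\|\cdot\|_A$, and bound the two right-hand terms by continuity.

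The paper's formula exhibits the error explicitly. It makes it easy to see that the operator part of the bound is attained when $A^{1/2}x^*$ is an eigenvector of $\mathcal E$ with eigenvalue $-\varepsilon$. Your route inverts nothing, works directly with the weak form, and would transfer unchanged to the continuous problem. It also separates the roles of the hypothesis. Coercivity needs only the quadratic-form bound; for non-Hermitian $\delta A$ one takes real parts. Hermiticity enters only where you pass from the numerical radius to $\|E\|\leq\varepsilon$ in the operator term. The paper relies on the same Hermiticity tacitly when it asserts $\|\mathcal E\|\leq\varepsilon$, so flagging it explicitly is an improvement rather than a restriction.

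For the second statement, the paper only cites $0\preceq C\preceq A$ and $0\preceq K\preceq A$. Your diagonal argument (using $\gamma\geq0$) and your blockwise positive-semidefinite scaling argument supply the step the paper leaves implicit: $|\langle v,\delta C\,v\rangle|\leq\varepsilon\langle v,Cv\rangle$ and $|\langle v,\delta K\,v\rangle|\leq\varepsilon\langle v,Kv\rangle$. Your remark that the accompanying change of the load belongs in $\delta b$ matches the paper's later caveat about signed loads.
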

\begin{proof}
Let $\mathcal E=A^{-1/2}\,\delta A\,A^{-1/2}$, so that $\|\mathcal E\|\leq\varepsilon$. Since $\tilde x-x^*=-\tilde A^{-1}\delta A\,x^*+\tilde A^{-1}\delta b$, we obtain $A^{1/2}(\tilde x-x^*)=-(I+\mathcal E)^{-1}\mathcal E\,A^{1/2}x^*+(I+\mathcal E)^{-1}A^{-1/2}\delta b$. Hence $\|\tilde x-x^*\|_A\leq(1-\varepsilon)^{-1}(\varepsilon\|x^*\|_A+\|\delta b\|_{A^{-1}})$. The second statement follows because $0\preceq C\preceq A$ and $0\preceq K\preceq A$.
\end{proof}

Geometric uncertainty, such as an error in the segmented wall of a vessel, is a Class I approximation of a particular kind. We describe it by a vector field $V$ that deforms the cross section. Its flow $\psi_s$ maps the reference cross section to the deformed one. By pulling the problem back to the reference coordinates, we keep the grid and the basis $F$ and change only the metric, $g\to\psi_s^*g=g+s\,\mathcal L_Vg+O(s^2)$.

\begin{proposition}[First-order shape perturbation]\label{prop:shape}
In two dimensions, let $\mu_1\geq\mu_2$ be the eigenvalues of $g^{-1}\mathcal L_Vg$ at each point. To first order in $s$, the deformation changes the coefficients of Eq.~\eqref{eq:KC} by
\begin{equation}
 \delta\bigl(\sqrt g\,g^{ij}\bigr)=-s\sqrt g\,\bigl(\mathcal L_Vg\bigr)^{ij}_{\circ},\qquad
 \delta\sqrt g=s\sqrt g\,\operatorname{div}_gV,
 \label{eq:shape}
\end{equation}
where $(\mathcal L_Vg)_\circ=\mathcal L_Vg-(\operatorname{div}_gV)\,g$ is the trace-free part and indices are raised with $g$. Hence Proposition~\ref{prop:perturb} applies, to first order, with
\begin{equation}
 \varepsilon=|s|\max\Bigl\{\sup_\Omega\tfrac12(\mu_1-\mu_2),\ \sup_\Omega\tfrac12|\mu_1+\mu_2|\Bigr\},
 \label{eq:shape-eps}
\end{equation}
and with $\delta b=s\operatorname{diag}(\operatorname{div}_gV)\,b$, when $\varrho$, $\gamma$, and $f$ are transported with the deformation. A conformal deformation, $\mathcal L_Vg=2\sigma g$, leaves $K$ unchanged and changes only $C$ and $b$.
\end{proposition}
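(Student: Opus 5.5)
Our plan is to compute the first variations of $\sqrt g$ and $g^{ij}$ under $g\to g+s\,h$ with $h=\mathcal L_Vg$. We then combine them in two dimensions and read off the quadratic-form bounds needed in Proposition~\ref{prop:perturb}.

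\emph{Step 1 (variations).} Jacobi's formula gives $\delta\sqrt g=\tfrac12 s\sqrt g\,g^{ij}h_{ij}$. The standard identity $\tfrac12\operatorname{tr}_g\mathcal L_Vg=\nabla_iV^i=\operatorname{div}_gV$ then yields the second formula of Eq.~\eqref{eq:shape}. For the inverse metric, $\delta g^{ij}=-s\,g^{ik}h_{kl}g^{lj}=-s\,h^{ij}$. The product rule then gives
\[
\delta(\sqrt g\,g^{ij})=s\sqrt g\bigl((\operatorname{div}_gV)g^{ij}-h^{ij}\bigr)=-s\sqrt g\,(h-(\operatorname{div}_gV)g)^{ij}.
\]
This is the trace-free form. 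In two dimensions $\operatorname{tr}_g h=2\operatorname{div}_gV$, so $h_\circ$ is indeed trace free. In particular, a conformal $h=2\sigma g$ gives $h_\circ=0$, so $K$ is unchanged, which is the last claim of the proposition.

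\emph{Step 2 (relative bounds).} At each point we diagonalize $g^{-1}h$ with eigenvalues $\mu_1\geq\mu_2$. Then $\operatorname{div}_gV=\tfrac12(\mu_1+\mu_2)$, so $|\delta\sqrt g|\leq|s|\tfrac12|\mu_1+\mu_2|\sqrt g$. The trace-free part $g^{-1}h_\circ$ has eigenvalues $\pm\tfrac12(\mu_1-\mu_2)$. As quadratic forms this gives $-\tfrac12|s|(\mu_1-\mu_2)\,\sqrt g\,g^{-1}\preceq\delta(\sqrt g\,g^{-1})\preceq\tfrac12|s|(\mu_1-\mu_2)\,\sqrt g\,g^{-1}$; this is a pointwise comparison of symmetric matrices in the $g$-orthonormal frame. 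When $\varrho$ and $\gamma$ are transported (pulled back, so they are unchanged as functions on the reference domain), the blocks of $W$ satisfy $|\delta W|\preceq\varepsilon_K W$ with $\varepsilon_K=|s|\sup\tfrac12(\mu_1-\mu_2)$. Similarly $|\delta C|\leq\varepsilon_C C$ entrywise with $\varepsilon_C=|s|\sup\tfrac12|\mu_1+\mu_2|$.

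\emph{Step 3 (assemble).} Since $K=\nabla_h^TW\nabla_h$, a blockwise bound $|\langle \xi,\delta W\xi\rangle|\leq\varepsilon_K\langle\xi,W\xi\rangle$ applied with $\xi=\nabla_hv$ gives $|\langle v,\delta K v\rangle|\leq\varepsilon_K\langle v,Kv\rangle$. The diagonal $C$ likewise satisfies $|\langle v,\delta Cv\rangle|\leq\varepsilon_C\langle v,Cv\rangle$. Adding the two and taking $\varepsilon=\max(\varepsilon_K,\varepsilon_C)$ gives $|\langle v,\delta A v\rangle|\leq\varepsilon\langle v,Av\rangle$, which is the hypothesis of Proposition~\ref{prop:perturb}, as in its second statement. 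For the load, $b=\operatorname{diag}(\sqrt g)f$ with $f$ transported, so $\delta b=\operatorname{diag}(\delta\sqrt g/\sqrt g)\,b=s\operatorname{diag}(\operatorname{div}_gV)\,b$.

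\emph{Main obstacle.} The main difficulty is Step 2 at the discrete level. The discrete weights $W$ sample $\sqrt g\,g^{ij}$ at staggered points, so the pointwise eigenvalue bound must hold at every quadrature point. Taking the supremum over $\Omega$ handles this. The claim is also explicitly only first order: we drop the $O(s^2)$ terms in $\psi_s^*g$ and in the inversion of $g$, and we state this rather than bounding the remainder.
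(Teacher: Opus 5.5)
Your proposal is correct and follows the paper's proof essentially step by step: the variation formulas for $g^{ij}$ and $\sqrt g$, the trace identity $\tfrac12 g^{kl}(\mathcal L_Vg)_{kl}=\operatorname{div}_gV$, the eigenvalues $\pm\tfrac12(\mu_1-\mu_2)$ of the trace-free part, and the relative bounds on the blocks of $W$ and the entries of $C$ fed into Proposition~\ref{prop:perturb}. One small attribution point: $\operatorname{tr}_g\mathcal L_Vg=2\operatorname{div}_gV$ holds in every dimension, and the genuinely two-dimensional input is $\operatorname{tr}_g g=2$, which is what makes $\mathcal L_Vg-(\operatorname{div}_gV)\,g$ trace free.
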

\begin{proof}
We use $\delta g^{ij}=-g^{ik}\delta g_{kl}g^{lj}$ and $\delta\sqrt g=\frac12\sqrt g\,g^{kl}\delta g_{kl}$ with $\delta g=s\,\mathcal L_Vg$. Since $(\mathcal L_Vg)_{ij}=\nabla_iV_j+\nabla_jV_i$, we have $\frac12g^{kl}(\mathcal L_Vg)_{kl}=\operatorname{div}_gV$. By substituting these expressions into $\delta(\sqrt g\,g^{ij})=\delta\sqrt g\,g^{ij}+\sqrt g\,\delta g^{ij}$, we obtain Eq.~\eqref{eq:shape}. Relative to $g$, the trace-free part has the eigenvalues $\pm\frac12(\mu_1-\mu_2)$, and $\operatorname{div}_gV=\frac12(\mu_1+\mu_2)$. The relative changes of the blocks of $W$ and of the entries of $C$ are therefore bounded by the two terms of Eq.~\eqref{eq:shape-eps}, and the bound on $\langle v,\delta A\,v\rangle$ follows as in the proof of Proposition~\ref{prop:perturb}. For $\mathcal L_Vg=2\sigma g$ the trace-free part vanishes.
\end{proof}

The stiffness matrix thus responds only to the shear part of the deformation, and the reaction term and the load respond only to the change of area. This is the infinitesimal form of Proposition~\ref{prop:conformal}.

A pointwise relative metric bound does not imply the same relative bound on a signed or cancelling load, so the load term must be estimated separately. A Class I approximation that is itself derived from a weak form of the type~\eqref{eq:weak}, with $\varrho>0$ and $\gamma\geq0$, preserves positive definiteness and therefore Lemma~\ref{lem:galerkin}. This is analogous to the conserving approximations of many-body theory, which are derived from a functional and therefore preserve the conservation laws~\cite{Baym1961}.

\paragraph{Class II: approximations of the sampler.} These change only the states that are measured. Examples are qDRIFT compilation~\cite{qdrift}, the Trotter splitting of Eq.~\eqref{eq:trotter}, truncated Walsh series of diagonal phases, approximate state preparation, gate errors, and readout errors. They do not change $A$ or $b$.

\begin{corollary}[Sampler approximations]\label{cor:classII}
Class II approximations leave the Galerkin solution on a given set unchanged. They affect the outcome law, and Theorem~\ref{thm:error} and Corollaries~\ref{cor:complexity} and~\ref{cor:highprob-galerkin} hold with the law that the approximate sampler actually produces.
\end{corollary}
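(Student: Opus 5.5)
The plan is to separate the two ingredients of the reconstruction, the retained index set $\mathcal S$ and the data $(\hat A,\hat b)$ used on it, and to show that a Class II approximation touches only the first. By definition, a Class II approximation changes the states that are prepared and measured, and hence the outcome law and the random set $\mathcal S$. It leaves the matrix $A$, the vector $b$, and the basis $F$ unchanged, and these are stored classically. For a fixed set $\mathcal S$, the Galerkin solution of Eq.~\eqref{eq:galerkin} is determined by $\hat A[\mathcal S,\mathcal S]=(FAF^\dagger)[\mathcal S,\mathcal S]$ and $\hat b[\mathcal S]=(Fb)[\mathcal S]$. Neither depends on the quantum states. The first claim then follows at once: on a given set, the exact and the approximate sampler lead to the same $x_{\mathcal S}$. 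Lemma~\ref{lem:galerkin} continues to hold because $A$ is still positive definite.

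For the second claim, I will go back through the proof of Theorem~\ref{thm:error} and check where the ideal law enters. The deterministic step, Eq.~\eqref{eq:tail-energy}, uses only Lemma~\ref{lem:galerkin} and $\|v\|_A^2\le\lambda_{\max}\|v\|^2$. It holds for any set $\mathcal S$, however that set was produced. The probabilistic step, Eq.~\eqref{eq:miss}, uses only two facts: the $M/d$ shots on the $k$th state are independent, and each returns $z$ with some probability $p_k(z)$. I will therefore define $p'_k(z)$ as the single-shot probability of $z$ under the approximate circuit for the $k$th time, including gate and readout errors, and set $q'(z)=d^{-1}\sum_k p'_k(z)$. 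With this law the estimate $\Pr(z\notin\mathcal S)\le e^{-Mq'(z)}$ follows verbatim, and hence Eq.~\eqref{eq:bound} holds with $q'$ in place of $q$.

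Corollaries~\ref{cor:complexity} and~\ref{cor:highprob-galerkin} were derived only from Eq.~\eqref{eq:bound}, Eq.~\eqref{eq:miss}, the tail condition, and the union bound. The tail condition refers to $\hat x$ and to $\lambda_{\max}$, $\lambda_{\min}$, which are unchanged, so both corollaries carry over with $q_{\min}$ computed from $q'$.

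The main point to handle with care is the independence assumption. A Class II approximation such as qDRIFT draws a random circuit. If a fresh circuit is drawn for each shot, the shots remain i.i.d. and $p'_k$ is the circuit-averaged probability. If instead one random circuit is reused across shots, I would condition on the circuit realization, apply the argument to the conditional law, and then average. The expectation bound holds conditionally and therefore also after averaging; the high-probability statement holds conditionally for each realization. Correlated noise across shots would break Eq.~\eqref{eq:miss}, so I would state explicitly that "the law actually produced" means the law of independent shots, possibly conditional on a frozen random compilation.
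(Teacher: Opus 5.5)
Your proposal is correct and follows the reasoning the paper relies on (the paper states the corollary without a separate proof): the Galerkin solve of Eq.~\eqref{eq:galerkin} uses only the classically stored $\hat A[\mathcal S,\mathcal S]$ and $\hat b[\mathcal S]$, and the proofs of Theorem~\ref{thm:error} and Corollaries~\ref{cor:complexity} and~\ref{cor:highprob-galerkin} use the outcome law only through the per-shot probabilities in Eq.~\eqref{eq:miss}, which may be those of the approximate circuits. Your explicit handling of shot independence is a useful caveat that the paper leaves implicit; note only that in the frozen-compilation case the averaged bound involves $\mathbb E_C\,e^{-Mq'(z\mid C)}$, which by Jensen's inequality is at least $e^{-M\,\mathbb E_C q'(z\mid C)}$, so that bound cannot be restated with the circuit-averaged law.
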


Circuit approximations can nevertheless destroy essential support. If an essential mode has zero probability, no finite shot budget recovers it without a separate classical expansion. At a fixed budget, both the achieved accuracy and the classical subspace cost can worsen. The hierarchy gives a simple design rule. The accuracy budget is spent on Class I, where Proposition~\ref{prop:perturb} applies. Class II approximations cost shots and coverage, and their effect enters only through the outcome law.

\subsection{Examples: elliptic, lens-shaped, and curved ducts}\label{sec:ducts}
We now apply the geometric method to three cases. The first two are exact conformal instances. The third uses the curvature terms of Sec.~\ref{sec:curved} and is treated numerically in Sec.~\ref{sec:geomresults}.

\subsubsection{The implicit step in conformal coordinates}
We consider fully developed axial flow $w(x,y,t)$ in a straight duct driven by a pressure gradient $G(t)$. Let $(u,v)$ be conformal coordinates of the cross section with scale factor $h(u,v)$, so that $\nabla^2=h^{-2}(\partial_{uu}+\partial_{vv})$. We nondimensionalize lengths by a geometric length $\ell_g$, the velocity by $G\ell_g^2/\mu$, and time by $\ell_g^2/\nu$. The implicit Euler step of the unsteady Stokes equation then reads
\begin{equation}
 -(\partial_{uu}+\partial_{vv})\,w^{n+1}+\Gamma h^2w^{n+1}=h^2\bigl(\Gamma w^n+G^{n+1}\bigr),\qquad\Gamma=\frac{\ell_g^2}{\nu\Delta t}.
 \label{eq:step}
\end{equation}
The operator is symmetric positive definite, and its variable coefficient $h^2$ is supplied by the geometry. For $\Gamma=0$, Eq.~\eqref{eq:step} reduces to steady Poiseuille flow with a constant-coefficient operator, to which Corollary~\ref{cor:const} applies. For a coronary artery with $\ell_g\approx1.3$--$1.5$~mm, $\nu=3.3\times10^{-6}$~m$^2$/s, and one hundred steps per cardiac cycle ($\Delta t=8$~ms), we obtain $\Gamma\approx64$--$85$. We therefore study $\Gamma=0$, $100$, and $1000$.

\subsubsection{Elliptic duct}
Elliptic coordinates are $x=\ell_g\cosh\mu\cos\nu$, $y=\ell_g\sinh\mu\sin\nu$, with $h^2=\ell_g^2(\sinh^2\mu+\sin^2\nu)$, where $\ell_g$ is half the distance between the foci. The wall is $\mu=\mu_0$, with semi-axes $A_e=\ell_g\cosh\mu_0$ and $B_e=\ell_g\sinh\mu_0$. We use $B_e/A_e=0.5$. In order to avoid the branch line at the focal segment, we solve on the quarter domain $0\leq\mu\leq\mu_0$, $0\leq\nu\leq\pi/2$. The focal segment and the two axes are symmetry lines with zero normal derivative, and the wall carries $w=0$. On a cell-centered grid with $N=2^m$ points per direction, the one-dimensional Laplacians are diagonalized exactly by the DCT-IV in $\mu$ and the DCT-II in $\nu$. The measurement basis is $F=\mathrm{DCT\text{-}IV}\otimes\mathrm{DCT\text{-}II}$ on $2m$ qubits. The steady solution is known in closed form~\cite{White2006},
\begin{equation}
 w=\frac{A_e^2B_e^2}{2(A_e^2+B_e^2)}\Bigl(1-\frac{x^2}{A_e^2}-\frac{y^2}{B_e^2}\Bigr),\qquad Q=\frac{\pi A_e^3B_e^3}{4(A_e^2+B_e^2)}.
 \label{eq:ellipse}
\end{equation}
Since $h^2$ is a sum of a function of $\mu$ and a function of $\nu$, the operator of Eq.~\eqref{eq:step} is separable,
\begin{equation}
 A=A_\mu\otimes I+I\otimes A_\nu,\qquad A_\mu=-\partial_{\mu\mu}+\Gamma\sinh^2\mu,\qquad A_\nu=-\partial_{\nu\nu}+\Gamma\sin^2\nu .
 \label{eq:separable}
\end{equation}
The continuous counterparts of $A_\mu$ and $A_\nu$ are the modified and ordinary Mathieu operators~\cite{McLachlan1947}. Separability has three consequences. First, the system is solved classically by two $N\times N$ eigenvalue problems. Second, $e^{-iAT}=e^{-iA_\mu T}\otimes e^{-iA_\nu T}$, so no entangling gate between the registers is needed for the time evolution. Third, the source $h^2$ has Schmidt rank 2, and the product evolution preserves this rank. Each Krylov circuit therefore consists of one $R_Y$ rotation that prepares $s_1|0\rangle+s_2|1\rangle$ on the first qubit, one CNOT gate to the second register, and two $m$-qubit unitaries
\begin{equation}
 W_\mu=\mathrm{DCT_4}\,e^{-iA_\mu T}\,U_u,\qquad W_\nu=\mathrm{DCT_2}\,e^{-iA_\nu T}\,U_v,
 \label{eq:ellcirc}
\end{equation}
where $U_u$ and $U_v$ map $|0\rangle,|1\rangle$ to the Schmidt vectors of the source. The circuit is exact, and its depth does not depend on $T$. This makes the elliptic duct a structured calibration benchmark. On the other hand, generic synthesis of $W_\mu$ and $W_\nu$ has a cost exponential in the register size, so separability alone does not supply an efficient asymptotic quantum algorithm, and the problem is classically easy.

\subsubsection{Lens-shaped duct}
Bipolar coordinates are $x=\ell_g\sinh\tau/(\cosh\tau-\cos\sigma)$, $y=\ell_g\sin\sigma/(\cosh\tau-\cos\sigma)$, with $h=\ell_g/(\cosh\tau-\cos\sigma)$. A curve of constant $\sigma$ is a circular arc through the foci $(\pm\ell_g,0)$. The lens is the region $\sigma_0\leq\sigma\leq2\pi-\sigma_0$, whose corners at the foci have the interior angle $W_c=2\pi-2\sigma_0$. For $\sigma_0=\pi/2$ the lens is a disk of radius $\ell_g$. We use $\sigma_0=2\pi/3$, which gives corners of $120^\circ$. The strip $-\infty<\tau<\infty$ is truncated to $|\tau|\leq\tau_{\max}$ with $w=0$ at $\tau=\pm\tau_{\max}$. It can be shown that the homogeneous solutions decay as $e^{-\pi|\tau|/W_c}$ toward the corners, so the truncation error is exponentially small. We use $\tau_{\max}=6$. With Dirichlet conditions on all sides and $N$ interior points, the basis is $F=\mathrm{DST\text{-}I}\otimes\mathrm{DST\text{-}I}$.

The lens and the source are symmetric about $\sigma=\pi$ and $\tau=0$. Therefore only modes with odd indices in both directions are excited. This symmetry sector has dimension $N^2/4$. Symmetry alone does not make every coefficient in the sector nonzero. This is the selection rule of Proposition~\ref{prop:symmetry} for the two reflections of the lens. In the quantum register this sector corresponds to the least significant qubit of each register being 0, and a measured bitstring that violates this rule can be discarded. This parity filter plays the role of particle-number postselection in SQD~\cite{sqd}.

The lens operator is not separable. We use the split Trotter step
\begin{equation}
 e^{-iA\Delta t/r}\approx F_q^\dagger D_{\Lambda}F_qD_h,\qquad D_{\Lambda}=e^{-i\Lambda_0\Delta t/r},\qquad D_h=e^{-i\Gamma\,\mathrm{diag}(h^2)\Delta t/r},
 \label{eq:trotter}
\end{equation}
where $\Lambda_0$ contains the eigenvalues of the constant-coefficient Laplacian. Both factors are diagonal in some basis, so no Pauli decomposition of $A$ is needed, and $D_\Lambda$ is separable between the registers. The splitting is a Class II approximation. On hardware the grid and the transform must be chosen together. With $N=2^m$ interior points the DST-I has the denominator $N+1$, so a power-of-two QFT does not implement it exactly without padding. A cell-centered Dirichlet grid leads to a DST-II with a power-of-two denominator, but it changes the discrete operator and the source. We leave it as a future option. For small registers the DST-I can be synthesized exactly as a generic register unitary, and we use this in the hardware pilot. All numerical results below use the DST-I grid.

\subsubsection{General and curved cross sections}
For a patient-specific cross section, Step~1 is a numerical conformal map onto a rectangle or a disk~\cite{Driscoll2002,Marshall2007}. The analytic $h^2$ of the examples above is then replaced by a computed metric factor, and the map error is a Class I approximation bounded by Proposition~\ref{prop:perturb}. If no map is available, a Brinkman penalization on a bounding box~\cite{Angot1999} gives an operator of the same form with a discontinuous coefficient. For a curved vessel, Step~2 uses the coefficients~\eqref{eq:curved}. The curvature breaks the symmetry $\tau\to-\tau$ of the lens, so that the parity filter acts only in $\sigma$.

\section{Numerical results for duct flows}\label{sec:results}

\subsection{Setup}
All results in this section use exact time evolution of the discrete operator and multinomial sampling from the exact outcome laws, without hardware noise unless stated otherwise. We use $d=12$ Krylov states with equal shots per state. Unless stated otherwise, the sampling times are $T_k=k\pi/\lambda_{\max}$, the reference short step of Sec.~\ref{sec:support}. The error is the relative energy-norm error $\|x_{\mathcal S}-x^*\|_A/\|x^*\|_A$ of the Galerkin solution of Eq.~\eqref{eq:galerkin} with respect to the solution of the discrete system. We also use an oracle ordering, which selects modes by the magnitude of the exact coefficients $|\hat x_z|$. It is a reference ordering and need not minimize the Galerkin energy error. The essential set $\mathcal E_\varepsilon$ is the smallest set of the form required in Corollary~\ref{cor:complexity}, with $\varepsilon=1\%$, built in the oracle ordering. No hardware data are included.

\subsection{Validation of the discretizations}
Table~\ref{tab:valid} validates both discretizations against exact steady solutions. The elliptic duct shows second-order convergence of the velocity and the flow rate. For the lens we use the disk ($\sigma_0=\pi/2$). Its flow rate converges to $0.03\%$ on the $128\times128$ grid, while the pointwise error is limited at about $1\%$ by the truncation at $\tau_{\max}=6$.

\begin{table}[t]
\centering
\caption{Validation against exact steady solutions. Ellipse: $B_e/A_e=0.5$, quarter domain. Disk: lens discretization with $\sigma_0=\pi/2$, $\tau_{\max}=6$. Errors are relative to the maximum velocity and to the exact flow rate.}\label{tab:valid}
\begin{tabular}{rcccc}
\toprule
& \multicolumn{2}{c}{Ellipse} & \multicolumn{2}{c}{Disk (lens discretization)}\\
\cmidrule(lr){2-3}\cmidrule(lr){4-5}
$N$ & max.\ velocity error & flow-rate error & max.\ velocity error & flow-rate error\\
\midrule
16 & $1.5\times10^{-3}$ & $+2.8\times10^{-3}$ & $8.1\times10^{-3}$ & $-1.1\times10^{-2}$\\
32 & $3.7\times10^{-4}$ & $+6.9\times10^{-4}$ & $6.9\times10^{-3}$ & $-3.0\times10^{-3}$\\
64 & $9.2\times10^{-5}$ & $+1.7\times10^{-4}$ & $8.2\times10^{-3}$ & $-8.6\times10^{-4}$\\
128 & $2.3\times10^{-5}$ & $+4.3\times10^{-5}$ & $9.0\times10^{-3}$ & $-2.9\times10^{-4}$\\
\bottomrule
\end{tabular}
\end{table}

\subsection{Fourier basis versus computational basis}
Figures~\ref{fig:lens} and~\ref{fig:ellipse} show the solutions and their Fourier coefficients. Table~\ref{tab:basis} compares the subspaces sampled on the $16\times16$ grids (8 qubits) with a total budget of $M=1200$ shots. In both geometries the Fourier basis gives an error that is one to three orders of magnitude smaller than the computational basis, with a subspace that is 1.7 to 7 times smaller. For $\Gamma=1000$ the time evolution helps clearly. With twelve states the error is $3.3\times10^{-4}$ (lens) and $8.3\times10^{-3}$ (ellipse), while the single state $|b\rangle$ with the same number of shots gives $2.4\times10^{-2}$ and $2.3\times10^{-2}$.

\begin{figure}[t]
\centering
\includegraphics[width=\textwidth]{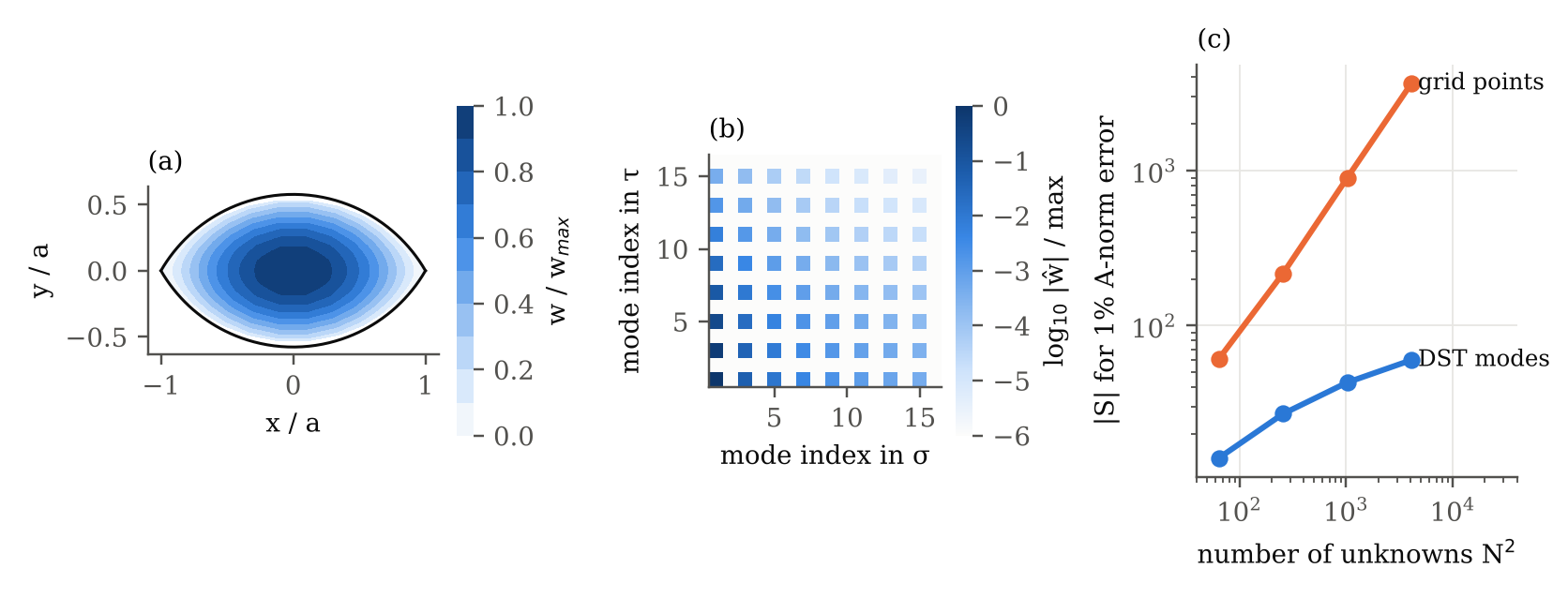}
\caption{Lens-shaped duct ($\sigma_0=2\pi/3$, $32\times32$ grid). (a) Steady axial velocity. (b) Magnitude of the DST-I coefficients of the solution; only odd--odd modes are nonzero. (c) Number of basis states required for $1\%$ error ($\Gamma=100$) versus the number of unknowns: DST modes selected by $|\hat b|$, grid points selected by an oracle ordering, and the total number of unknowns (dashed).}\label{fig:lens}
\end{figure}

\begin{figure}[t]
\centering
\includegraphics[width=\textwidth]{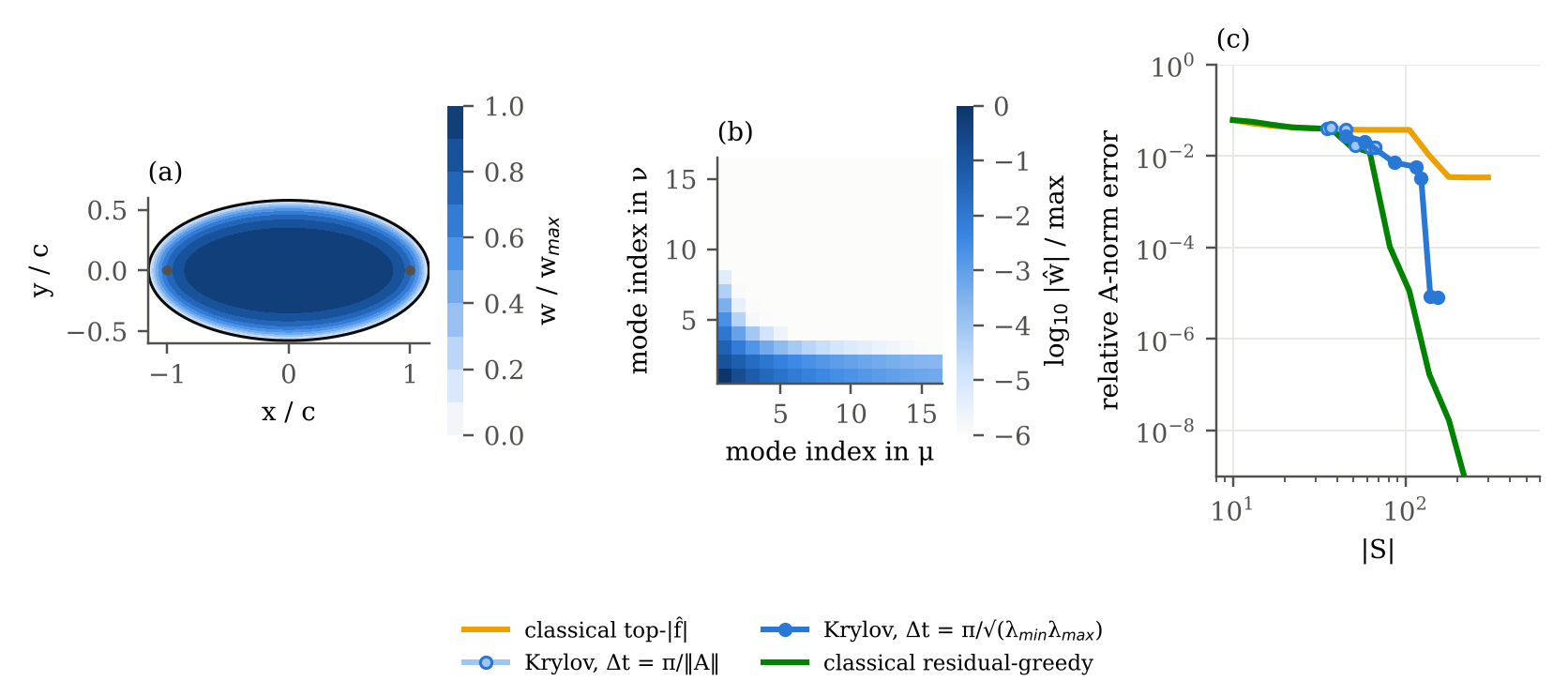}
\caption{Elliptic duct ($B_e/A_e=0.5$, $32\times32$ quarter grid, $\Gamma=100$). (a) Axial velocity after the first implicit step from rest, mirrored to the full ellipse; dots mark the foci. (b) Magnitude of the coefficients in the DCT-IV$\otimes$DCT-II basis; the solution is concentrated near two lines of modes. (c) Error versus subspace size for classical selection by $|\hat b|$, Krylov sampling with two uniform steps ($\pi/\lambda_{\max}$ and $\pi/\sqrt{\lambda_{\min}\lambda_{\max}}$; up to 19200 shots), and classical residual-greedy selection.}\label{fig:ellipse}
\end{figure}

\begin{table}[t]
\centering
\caption{Sampled subspaces on the $16\times16$ grids (8 qubits), $M=1200$, $T_k=k\pi/\lambda_{\max}$. Entries are $|\mathcal S|$ / relative energy-norm error.}\label{tab:basis}
\begin{tabular}{llccc}
\toprule
Geometry & $\Gamma$ & Fourier, $d=12$ & Fourier, $d=1$ & Computational, $d=12$\\
\midrule
Lens & 0 & 26 / $1.0\times10^{-2}$ & 24 / $1.8\times10^{-2}$ & 139 / $1.8\times10^{-1}$\\
Lens & 100 & 41 / $1.0\times10^{-2}$ & 21 / $2.3\times10^{-2}$ & 136 / $2.5\times10^{-1}$\\
Lens & 1000 & 54 / $3.3\times10^{-4}$ & 23 / $2.4\times10^{-2}$ & 93 / $3.0\times10^{-1}$\\
Ellipse & 0 & 28 / $4.4\times10^{-3}$ & 25 / $6.4\times10^{-3}$ & 192 / $5.3\times10^{-1}$\\
Ellipse & 100 & 28 / $2.1\times10^{-2}$ & 25 / $4.1\times10^{-2}$ & 195 / $4.9\times10^{-1}$\\
Ellipse & 1000 & 60 / $8.3\times10^{-3}$ & 28 / $2.3\times10^{-2}$ & 193 / $3.6\times10^{-1}$\\
\bottomrule
\end{tabular}
\end{table}

Table~\ref{tab:scaling} shows the scaling with the grid for $\Gamma=100$. For the lens, the number of modes selected by $|\hat b|$ for $1\%$ error grows from 14 to 60 while the number of unknowns grows from 64 to 4096. For the ellipse the oracle ordering requires between 19 and 45 modes. In the computational basis, an oracle ordering still requires $85$--$95\%$ of all grid points for the lens and all grid points for the ellipse. The selection by $|\hat b|$ is smaller than the oracle ordering in some lens rows, which illustrates that the oracle ordering is not an optimal Galerkin subspace.

\begin{table}[t]
\centering
\caption{Scaling with the grid ($\Gamma=100$). Number of basis states required for $1\%$ relative energy-norm error. The lens grid is the full domain; the ellipse grid is the quarter domain.}\label{tab:scaling}
\begin{tabular}{rrccccc}
\toprule
& & \multicolumn{3}{c}{Lens} & \multicolumn{2}{c}{Ellipse}\\
\cmidrule(lr){3-5}\cmidrule(lr){6-7}
$N$ & qubits & oracle & top-$|\hat b|$ & grid points (oracle) & oracle & top-$|\hat b|$\\
\midrule
8 & 6 & 14 & 14 & 61 & 19 & 27\\
16 & 8 & 27 & 27 & 217 & 32 & 115\\
32 & 10 & 46 & 43 & 897 & 45 & 143\\
64 & 12 & 63 & 60 & 3649 & 40 & 249\\
\bottomrule
\end{tabular}
\end{table}

\subsection{Verification of the energy-norm theory}\label{sec:theorycheck}
\paragraph{Support completeness.} Table~\ref{tab:support} tests Eq.~\eqref{eq:support} without a support threshold in the bound itself. The ratio $q^\infty(z)\operatorname{tr}A^{-2}/|\hat x_z|^2$ must be at least 1 for every mode, and it is. We compute $q^\infty$ with degenerate eigenvalues grouped into spectral projectors (tolerance $10^{-6}\lambda_{\max}$). Without this grouping, exactly degenerate pairs from different symmetry sectors are split and produce spurious weight outside the sector. The same ratio formed with the single-state law $|\hat b_z|^2$ vanishes for the ellipse. The source $h^2=\ell_g^2(\sinh^2\mu+\sin^2\nu)$ excites exactly 32 or 64 modes, whereas the solution has weight above $10^{-12}$ in 93 to 139 modes. For the lens, both $\hat b$ and $\hat x$ lie in the odd--odd sector of dimension $N^2/4$. The solution coefficients decay rapidly, so the number of modes counted as nonzero depends on the threshold; Table~\ref{tab:support} gives two thresholds. The minimum ratios do not change between them. Table~\ref{tab:scaling} shows the consequence of the ellipse mismatch: a selection by $|\hat b|$ matches the oracle for the lens and fails for the ellipse.

\begin{table}[t]
\centering
\caption{Support completeness, Eq.~\eqref{eq:support}. Minimum over modes with $|\hat x_z|^2\geq10^{-24}\|\hat x\|^2$ of the dephased ratio $q^\infty(z)\operatorname{tr}A^{-2}/|\hat x_z|^2$ (bound: $\geq1$) and of the single-state ratio $|\hat b_z|^2\operatorname{tr}A^{-2}/|\hat x_z|^2$. The last three columns count solution modes with relative weight above $10^{-12}$ and $10^{-24}$, and the exact support of $\hat b$.}\label{tab:support}
\begin{tabular}{lccccc}
\toprule
Case & dephased ratio & single-state ratio & $\#\hat x\,(10^{-12})$ & $\#\hat x\,(10^{-24})$ & $|\operatorname{supp}\hat b|$\\
\midrule
Ellipse $16^2$, $\Gamma=1000$ & 10.7 & 0 & 93 & 236 & 32\\
Ellipse $32^2$, $\Gamma=100$ & 4.0 & 0 & 94 & 266 & 64\\
Ellipse $32^2$, $\Gamma=1000$ & 9.4 & 0 & 139 & 416 & 64\\
Lens $16^2$, $\Gamma=1000$ & 791 & $3.8\times10^{3}$ & 64 & 64 & 64\\
Lens $32^2$, $\Gamma=100$ & 11.9 & 3.4 & 150 & 256 & 256\\
Lens $32^2$, $\Gamma=1000$ & 435 & 3.2 & 167 & 256 & 256\\
\bottomrule
\end{tabular}
\end{table}

\paragraph{Error bound.} Table~\ref{tab:bound} compares the observed errors with Theorem~\ref{thm:error}. The bound holds in all cases. It exceeds the observed root-mean-square error by a factor of 1.2 to 18, and by less than 4 in ten of the twelve entries.

\begin{table}[t]
\centering
\caption{Relative energy-norm error: observed root-mean-square value over four sampling runs / square root of the bound of Theorem~\ref{thm:error}.}\label{tab:bound}
\begin{tabular}{llcc}
\toprule
Case & times & $M=4800$ & $M=19200$\\
\midrule
Ellipse $16^2$, $\Gamma=1000$ & $k\pi/\lambda_{\max}$ & $2.7\times10^{-4}$ / $2.8\times10^{-3}$ & $1.1\times10^{-4}$ / $2.6\times10^{-4}$\\
Ellipse $32^2$, $\Gamma=100$ & $k\pi/\sqrt{\lambda_{\min}\lambda_{\max}}$ & $4.6\times10^{-3}$ / $7.2\times10^{-3}$ & $3.7\times10^{-5}$ / $6.5\times10^{-4}$\\
Ellipse $32^2$, $\Gamma=1000$ & $k\pi/\lambda_{\max}$ & $1.1\times10^{-2}$ / $2.1\times10^{-2}$ & $2.7\times10^{-3}$ / $1.0\times10^{-2}$\\
Lens $16^2$, $\Gamma=1000$ & $k\pi/\lambda_{\max}$ & $3.6\times10^{-5}$ / $1.4\times10^{-4}$ & $8.7\times10^{-6}$ / $1.3\times10^{-5}$\\
Lens $32^2$, $\Gamma=100$ & $k\pi/\lambda_{\max}$ & $8.7\times10^{-3}$ / $1.3\times10^{-2}$ & $3.2\times10^{-3}$ / $4.8\times10^{-3}$\\
Lens $32^2$, $\Gamma=1000$ & $k\pi/\lambda_{\max}$ & $1.2\times10^{-2}$ / $1.8\times10^{-2}$ & $4.8\times10^{-3}$ / $5.7\times10^{-3}$\\
\bottomrule
\end{tabular}
\end{table}

\paragraph{Uniform steps and the multiscale heuristic.} Table~\ref{tab:times} lists $q_{\min}(\mathcal E_\varepsilon)$ for two uniform steps and for Algorithm~\ref{alg:times}. The short step $\pi/\lambda_{\max}$ fails on the finer elliptic grids, where $q_{\min}$ drops to $10^{-15}$--$10^{-20}$, in agreement with Lemma~\ref{lem:finite}. The geometric-mean step is better but loses a factor of 37 on the $64\times64$ ellipse. The multiscale heuristic is within a factor of 2 of the better uniform step in all twelve cases and is the best choice in six. This is an empirical observation; Lemma~\ref{lem:hedging} does not apply to Algorithm~\ref{alg:times}.

\begin{table}[t]
\centering
\caption{Smallest sampling probability $q_{\min}(\mathcal E_\varepsilon)$ on the essential set ($\varepsilon=1\%$) for three sets of $d=12$ times. The last column is the ratio of the multiscale value to the better of the two uniform steps.}\label{tab:times}
\begin{tabular}{lrcccc}
\toprule
Case & $|\mathcal E_\varepsilon|$ & $k\pi/\lambda_{\max}$ & $k\pi/\sqrt{\lambda_{\min}\lambda_{\max}}$ & multiscale & ratio\\
\midrule
Ellipse $16^2$, $\Gamma=100$ & 37 & $1.2\times10^{-7}$ & $6.7\times10^{-4}$ & $6.5\times10^{-4}$ & 0.97\\
Ellipse $16^2$, $\Gamma=1000$ & 41 & $1.3\times10^{-3}$ & $2.6\times10^{-4}$ & $7.2\times10^{-4}$ & 0.55\\
Ellipse $32^2$, $\Gamma=100$ & 56 & $5.3\times10^{-15}$ & $2.9\times10^{-4}$ & $3.0\times10^{-4}$ & 1.03\\
Ellipse $32^2$, $\Gamma=1000$ & 75 & $2.8\times10^{-6}$ & $9.3\times10^{-5}$ & $1.5\times10^{-4}$ & 1.6\\
Ellipse $64^2$, $\Gamma=100$ & 67 & $8.4\times10^{-20}$ & $4.1\times10^{-6}$ & $1.5\times10^{-4}$ & 37\\
Ellipse $64^2$, $\Gamma=1000$ & 118 & $5.0\times10^{-15}$ & $3.8\times10^{-5}$ & $7.5\times10^{-5}$ & 2.0\\
Lens $16^2$, $\Gamma=100$ & 35 & $2.1\times10^{-4}$ & $1.2\times10^{-4}$ & $1.6\times10^{-4}$ & 0.76\\
Lens $16^2$, $\Gamma=1000$ & 31 & $2.0\times10^{-3}$ & $2.1\times10^{-3}$ & $2.0\times10^{-3}$ & 0.95\\
Lens $32^2$, $\Gamma=100$ & 58 & $5.0\times10^{-5}$ & $5.4\times10^{-5}$ & $1.2\times10^{-4}$ & 2.2\\
Lens $32^2$, $\Gamma=1000$ & 59 & $4.4\times10^{-5}$ & $3.9\times10^{-5}$ & $4.3\times10^{-5}$ & 0.98\\
Lens $64^2$, $\Gamma=100$ & 90 & $5.7\times10^{-6}$ & $7.3\times10^{-5}$ & $8.8\times10^{-5}$ & 1.2\\
Lens $64^2$, $\Gamma=1000$ & 107 & $3.8\times10^{-5}$ & $8.8\times10^{-6}$ & $2.0\times10^{-5}$ & 0.53\\
\bottomrule
\end{tabular}
\end{table}

\paragraph{Efficiency and initial state.} With the multiscale heuristic, the efficiency of Eq.~\eqref{eq:eta} lies between $2\times10^{-3}$ and $6\times10^{-2}$. The sampler is therefore 16 to 480 times less efficient than a uniform law on $\mathcal E_\varepsilon$. It seems natural to start the evolution from an approximate solution, such as $|\Lambda_0^{-1}b\rangle$, so that the sampling resembles $|\hat x|^2$. This was harmful for the tested initial states. In the four cases tested, $q_{\min}(\mathcal E_\varepsilon)$ drops by four to six orders of magnitude and the error at $M=19200$ increases to about $10\%$. Proposition~\ref{prop:eff} explains the failure: a law close to $|\hat x|^2$ is strongly concentrated, while efficient coverage requires a flat law on $\mathcal E_\varepsilon$.

\subsection{Noise and the parity filter}
We model hardware errors by replacing each shot, with probability $p$, by a uniformly random index. Table~\ref{tab:noise} shows the result for the lens on the $32\times32$ grid. The parity filter removes about three quarters of the spurious indices without changing the error. In these runs the error decreases as the contamination increases, because random indices fill the small symmetric sector. This is an empirical outcome for this law. It is not a consequence that holds at a fixed shot budget for arbitrary replacement noise (Corollary~\ref{cor:noise}), and it shows that a low error alone is not evidence of a useful quantum sampler.

\begin{table}[t]
\centering
\caption{Effect of sample contamination (lens, $32\times32$, $\Gamma=100$, $M=4800$, mean of five runs). Entries are $|\mathcal S|$ / error.}\label{tab:noise}
\begin{tabular}{ccc}
\toprule
$p$ & no filter & parity filter\\
\midrule
0 & 58 / $8.0\times10^{-3}$ & 58 / $8.0\times10^{-3}$\\
0.05 & 250 / $6.5\times10^{-3}$ & 95 / $6.5\times10^{-3}$\\
0.2 & 650 / $5.2\times10^{-3}$ & 177 / $5.2\times10^{-3}$\\
0.5 & 927 / $2.3\times10^{-3}$ & 236 / $2.3\times10^{-3}$\\
\bottomrule
\end{tabular}
\end{table}

\subsection{Comparison with classical mode selection}
Because $b$ and $A$ are known classically, the sampler must be compared with classical selections of the same Galerkin subspace. We consider three. The first selects the modes with the largest $|\hat b_z|$. The second selects the largest $|\hat b_z/\Lambda_{0,z}|$. The third is the residual-greedy method, which starts from the ten largest $|\hat b_z|$ and repeatedly adds the modes with the largest residual, i.e., Phase~2 of Algorithm~\ref{alg:solver} used alone. Its cost includes residual formation and repeated solves; the efficiency $\eta_{\mathrm s}$ is not a performance measure for this deterministic method.

Table~\ref{tab:classical} shows the result for the ellipse at matched $|\mathcal S|$. With a suitable uniform step, the Krylov sampling outperforms the first two selections by one to four orders of magnitude. This is the effect of Theorem~\ref{thm:support}. The residual-greedy method, however, is better than the Krylov sampling in every case, often by several orders of magnitude. For the lens it is at least as good. For an off-center localized source in the lens, the selection by $|\hat b|$ was also better than the Krylov sampling, by a factor of 1.3 to 2.4 in error. These results are consistent with the efficiency analysis and with the remark in Sec.~\ref{sec:discussion} that favorable concentration alone is not evidence of quantum advantage.

\begin{table}[t]
\centering
\caption{Ellipse, $\Gamma=100$, $M=19200$: Krylov sampling with $T_k=k\,c\,\pi/\lambda_{\max}$ versus classical selections at the same $|\mathcal S|$. Entries are relative energy-norm errors.}\label{tab:classical}
\begin{tabular}{rrcccc}
\toprule
$N$ & $c$ & Krylov ($|\mathcal S|$ / error) & top-$|\hat b|$ & top-$|\hat b/\Lambda_0|$ & residual-greedy\\
\midrule
16 & 1 & 37 / $1.2\times10^{-2}$ & $1.6\times10^{-2}$ & $1.0\times10^{-2}$ & $1.9\times10^{-3}$\\
16 & 10 & 93 / $1.1\times10^{-6}$ & $1.0\times10^{-2}$ & $9.7\times10^{-3}$ & $6.0\times10^{-8}$\\
32 & 1 & 64 / $1.5\times10^{-2}$ & $3.8\times10^{-2}$ & $3.8\times10^{-2}$ & $1.2\times10^{-2}$\\
32 & 10 & 90 / $1.1\times10^{-3}$ & $3.8\times10^{-2}$ & $3.5\times10^{-3}$ & $1.7\times10^{-5}$\\
32 & 100 & 192 / $9.8\times10^{-4}$ & $3.5\times10^{-3}$ & $1.6\times10^{-3}$ & $7.9\times10^{-9}$\\
64 & 1 & 110 / $3.8\times10^{-2}$ & $3.8\times10^{-2}$ & $3.8\times10^{-2}$ & $4.6\times10^{-3}$\\
64 & 100 & 218 / $1.6\times10^{-3}$ & $3.7\times10^{-2}$ & $5.3\times10^{-4}$ & $6.3\times10^{-7}$\\
\bottomrule
\end{tabular}
\end{table}

\subsection{Geometric extensions}\label{sec:geomresults}
\paragraph{Curvature.} Table~\ref{tab:curv} applies Eq.~\eqref{eq:curved} to the disk and the $120^\circ$ lens on the $32\times32$ grid with $\Gamma=100$, the multiscale heuristic, and $M=19200$. The curvature breaks the symmetry $\tau\to-\tau$, and the number of nonzero modes doubles from 256 to 512. The essential set grows by a factor of 1.5 to 1.9 at $\kappa_c\ell_g=0.3$, while $\eta_{\mathrm s}$ stays of order $10^{-3}$. The source $h^2$ remains symmetric while the solution does not, which is the support mismatch of Theorem~\ref{thm:support}. As a result, at $\kappa_c\ell_g=0.3$ the Krylov sampling is 2.5 to 5 times more accurate than the selection by $|\hat b|$. The residual-greedy selection remains more accurate than both.

\begin{table}[t]
\centering
\small\setlength{\tabcolsep}{4pt}
\caption{Curved vessels, reduced model~\eqref{eq:curved} ($32\times32$, $\Gamma=100$, multiscale heuristic, $M=19200$). Errors are relative energy-norm errors at the $|\mathcal S|$ reached by the Krylov sampling.}\label{tab:curv}
\begin{tabular}{lrrrcrccc}
\toprule
Cross section & $\kappa_c\ell_g$ & nonzero & $|\mathcal E_\varepsilon|$ & $\eta_{\mathrm s}$ & $|\mathcal S|$ & Krylov & top-$|\hat b|$ & residual-greedy\\
\midrule
Disk & 0 & 256 & 70 & $6.0\times10^{-3}$ & 236 & $2.0\times10^{-3}$ & $6.4\times10^{-7}$ & $3.5\times10^{-7}$\\
Disk & 0.1 & 512 & 107 & $5.6\times10^{-3}$ & 401 & $4.3\times10^{-3}$ & $3.4\times10^{-3}$ & $2.0\times10^{-6}$\\
Disk & 0.3 & 512 & 132 & $5.8\times10^{-3}$ & 413 & $4.0\times10^{-3}$ & $1.0\times10^{-2}$ & $3.3\times10^{-6}$\\
Lens $120^\circ$ & 0 & 256 & 58 & $6.7\times10^{-3}$ & 190 & $1.5\times10^{-3}$ & $2.5\times10^{-6}$ & $9.3\times10^{-7}$\\
Lens $120^\circ$ & 0.1 & 512 & 84 & $4.4\times10^{-3}$ & 338 & $2.9\times10^{-3}$ & $7.6\times10^{-3}$ & $2.5\times10^{-6}$\\
Lens $120^\circ$ & 0.3 & 511 & 102 & $2.0\times10^{-3}$ & 338 & $4.5\times10^{-3}$ & $2.3\times10^{-2}$ & $4.2\times10^{-6}$\\
\bottomrule
\end{tabular}
\end{table}

\section{Hardware validation on two platforms}\label{sec:prospective}\label{sec:hardware}
We tested the method on two processors of different technology: the IonQ Forte-1 trapped-ion processor and the IBM superconducting processor \texttt{ibm\_cleveland}. We used the $8\times8$ grids (six data qubits) of the ellipse and the lens with $\Gamma=100$ on both processors, and the $16\times16$ grids (eight data qubits) on the IBM processor. The six-qubit logical circuits are identical on the two platforms. At these sizes every quantity of Sec.~\ref{sec:results} can be computed exactly. The measured outcome laws can therefore be compared with the ideal laws mode by mode. The experiment does not test asymptotic scaling, and it is not intended to show an advantage. With 1500 shots per case, which exceeds the dimension, the sampled sets cover 89 to 100\% of the admissible basis. The experiments therefore test the outcome laws, the noise model, and the selection rule. They do not test the reduction of the subspace dimension, which is the purpose of sample-based methods.

\subsection{Circuits and execution}
We used three Krylov states for each duct, with $T_k=k\Delta t$, $k=0,1,2$, and the geometric-mean step $\Delta t=\pi/\sqrt{\lambda_{\min}\lambda_{\max}}$. For the ellipse the operator separates as $A=A_\mu\otimes I+I\otimes A_\nu$, and the source $h^2$ has Schmidt rank two. Each circuit then consists of a short preparation of the two Schmidt coefficients and one unitary per register. The register unitary combines the Schmidt vectors of $b$, the evolution, and the transform to the cosine basis. The circuits are exact. They contain 38 or 39 logical two-qubit gates for six qubits, below the bound of 41 from generic three-qubit synthesis~\cite{Shende2006}, and 191 for eight qubits. These register unitaries are computed from the classical diagonalization of one-dimensional operators. The ellipse circuits therefore test sampling, measurement, and reconstruction under real noise, not a scalable implementation of the time evolution.

For the lens we prepared the source exactly and used one symmetric Trotter step of the split evolution of Sec.~\ref{sec:ducts} per circuit. A step consists of register unitaries built from the DST-I and $e^{-i\Lambda\delta/2}$, and the diagonal phase $e^{-i\Gamma h^2\delta}$ on all qubits. These circuits use no eigendecomposition of $A$, and all qubits are active. They contain 78, 194, and 315 logical two-qubit gates for six qubits and 332, 862, and 1395 for eight qubits. With a single step, the circuit laws for $k=1$ and 2 differ from the exact evolution by total-variation distances between 0.05 and 0.08. We therefore compare the measured laws with the laws of the circuits as submitted.

We use the register convention of Appendix~\ref{app:discretization}: the integer outcome of the measured bits equals the row-major mode index $k_1N+k_2$. Every circuit was checked against its ideal law by an exact statevector calculation before submission. Each circuit was run with 500 shots. The IonQ circuits were run on 11 October 2026 without debiasing or symmetry verification. The provider returned the outcome frequencies; all of them are integer multiples of $1/500$, so the integer counts are recovered exactly. The IBM circuits were run on 14 October 2026 with dynamical decoupling and twirling disabled, and the raw counts were recorded. The IBM processor uses CZ as its native two-qubit gate. On the qubits used, the calibrated median CZ error was $1.5\times10^{-3}$ and the median readout error was $3.5\times10^{-3}$. Routing to the heavy-hex connectivity increased the native two-qubit count by a factor of 1.6 to 2.1 relative to the logical count.

\begin{table}[t]
\centering
\caption{Hardware results (500 shots per circuit). 2Q: logical / native two-qubit gates. TV: total-variation distance between the measured law and the circuit law; the sampling column is its expected value for 500 ideal shots. $p$: fitted replacement probability of the model $(1-p)q+p/2^n$. Residual: distance after the fit, with its sampling value in parentheses. Acceptance: fraction of lens shots in the odd--odd sector, followed by the value $1-3p/4$ predicted by the fitted model.}\label{tab:hw-results}
\footnotesize\setlength{\tabcolsep}{3.5pt}
\begin{tabular}{llrccccc}
\toprule
Platform & Circuit & 2Q & TV & TV (sampling) & $p$ & Residual & Acceptance / $1-3p/4$\\
\midrule
IonQ, 6 qubits & Ellipse, $k=0$ & 38 & 0.296 & 0.046 & 0.38 & 0.215 (0.110) & --\\
 & Ellipse, $k=1$ & 39 & 0.274 & 0.062 & 0.28 & 0.191 (0.110) & --\\
 & Ellipse, $k=2$ & 39 & 0.226 & 0.074 & 0.23 & 0.139 (0.109) & --\\
 & Lens, $k=0$ & 78 & 0.327 & 0.043 & 0.42 & 0.233 (0.114) & 0.726 / 0.689\\
 & Lens, $k=1$ & 194 & 0.539 & 0.055 & 0.64 & 0.167 (0.130) & 0.510 / 0.520\\
 & Lens, $k=2$ & 315 & 0.748 & 0.042 & 0.88 & 0.209 (0.139) & 0.310 / 0.338\\
\midrule
IBM, 6 qubits & Ellipse, $k=0$ & 38 / 67 & 0.441 & 0.046 & 0.51 & 0.296 (0.120) & --\\
 & Ellipse, $k=1$ & 39 / 68 & 0.268 & 0.062 & 0.29 & 0.178 (0.111) & --\\
 & Ellipse, $k=2$ & 39 / 68 & 0.258 & 0.073 & 0.14 & 0.214 (0.099) & --\\
 & Lens, $k=0$ & 78 / 143 & 0.686 & 0.043 & 0.76 & 0.273 (0.134) & 0.462 / 0.431\\
 & Lens, $k=1$ & 194 / 317 & 0.779 & 0.055 & 0.98 & 0.186 (0.142) & 0.262 / 0.266\\
 & Lens, $k=2$ & 315 / 527 & 0.856 & 0.043 & 0.97 & 0.149 (0.143) & 0.266 / 0.270\\
\midrule
IBM, 8 qubits & Ellipse, $k=0$ & 191 / 340 & 0.677 & 0.055 & 0.78 & 0.447 (0.272) & --\\
 & Ellipse, $k=1$ & 191 / 340 & 0.660 & 0.064 & 0.78 & 0.425 (0.273) & --\\
 & Ellipse, $k=2$ & 191 / 340 & 0.663 & 0.074 & 0.71 & 0.436 (0.266) & --\\
 & Lens, $k=0$ & 332 / 711 & 0.911 & 0.057 & 1.00 & 0.298 (0.276) & 0.296 / 0.250\\
 & Lens, $k=1$ & 862 / 1813 & 0.900 & 0.088 & 1.00 & 0.276 (0.276) & 0.216 / 0.251\\
 & Lens, $k=2$ & 1395 / 2975 & 0.890 & 0.090 & 1.00 & 0.295 (0.277) & 0.260 / 0.251\\
\bottomrule
\end{tabular}
\end{table}

\subsection{Analysis}
At these sizes coverage of the essential set is not a demanding test. With the three six-qubit circuits, the ideal sampler finds all essential modes with probability 0.94 (ellipse, $|\mathcal E_\varepsilon|=21$) and 0.84 (lens, $|\mathcal E_\varepsilon|=15$) from 1500 shots, and a uniform sampler needs only about 350 and 78 shots, respectively, for probability 0.9. We therefore base the analysis on the outcome laws and on the noise model of Corollary~\ref{cor:noise}. We then compare the reconstructions at equal subspace size.

\paragraph{Outcome laws.} Table~\ref{tab:hw-results} and Figs.~\ref{fig:ionq} and~\ref{fig:ibm} show that the measured laws differ from the circuit laws by much more than the sampling fluctuation on both platforms. The survival fraction $1-p$ of the fitted replacement model decreases with circuit size. With $n_{2\mathrm q}$ the number of logical two-qubit gates, a fit of the form $1-p=c\,e^{-\alpha n_{2\mathrm q}}$ [Fig.~\ref{fig:ibm}(c)] gives $c=0.93$ and $\alpha=0.0061$ for IonQ, and $c=0.72$ and $\alpha=0.0060$ for IBM, where the IBM fit excludes the saturated circuits with $p>0.95$. The decay per logical gate is thus the same within the accuracy of the fits. Per native CZ gate the IBM rate is $0.0034$, about twice the calibrated gate error, which indicates additional idling and crosstalk errors in the absence of dynamical decoupling. The two platforms differ mainly in the gate-independent loss, about $7.5\%$ for IonQ and about $28\%$ for IBM. The residual after the fit is 1.0 to 2.5 times its sampling value. Thus the noise is not entirely uniform, and the deviation is largest for the strongly concentrated law of the ellipse at $k=0$. For IonQ, a model with independent bit flips on each qubit in addition to uniform replacement reduces the residual to 1.1 to 1.5 times its sampling value, but the fitted flip probabilities vary from circuit to circuit and are not determined reliably by 500 shots. The eight-qubit lens circuits, with 711 to 2975 native two-qubit gates, are fully randomized ($p>0.99$).

\paragraph{Parity filter.} For the lens, the measured fraction of shots in the odd--odd sector agrees with the prediction $1-3p/4$ of the fitted uniform model to within 0.04 for all six-qubit circuits on both platforms. For the fully randomized eight-qubit circuits it approaches the uniform value $1/4$, as expected. On IonQ the filter removes about half of all shots and reduces the distance to the circuit law from 0.33 to 0.17 for $k=0$ and from 0.54 to 0.16 for $k=1$. The parity filter therefore acts as a direct and inexpensive noise diagnostic, as anticipated in Sec.~\ref{sec:symmetry}. This use of a measurement-visible constraint is the same mechanism as the particle-number recovery of SQD~\cite{sqd} and the code-space recovery of Ref.~\cite{csr}. In the present case the constraint is not engineered; it follows from the isometries of the cross section.

\paragraph{Dominance.} Corollary~\ref{cor:noise} predicts $q_{\mathrm{hw}}(z)\geq(1-p)q(z)$ for replacement noise. For the pooled six-qubit lens shots, $\min_{z\in\mathcal E_\varepsilon}q_{\mathrm{hw}}(z)/q(z)$ is 0.36 on IonQ and 0.09 on IBM, compared with $1-\bar p=0.35$ and 0.10. For the ellipse the ratio falls below $1-\bar p$ on both platforms: 0.65 versus 0.70 on IonQ, 0.42 versus 0.69 on IBM with six qubits, and 0.14 versus 0.25 on IBM with eight qubits. These violations are consistent with the nonuniform part of the noise, for which the pointwise dominance is not guaranteed. The coverage bounds of Sec.~\ref{sec:galerkin-theory} should therefore be applied to hardware data with the measured, not the modeled, outcome law.

\begin{figure}[t]
\centering
\includegraphics[width=\textwidth]{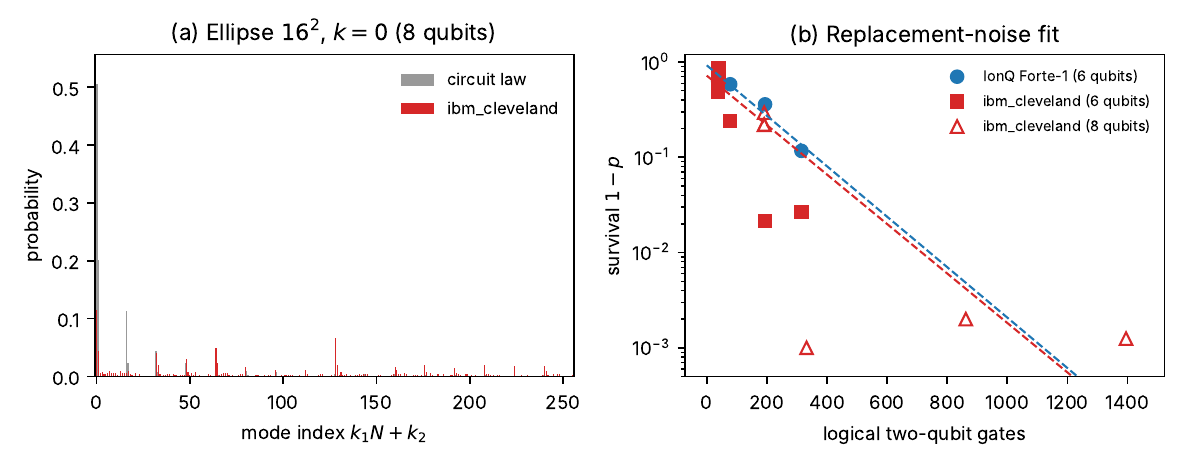}
\caption{IBM \texttt{ibm\_cleveland} results. (a) Circuit law (gray) and measured frequencies (red, 500 shots) for the eight-qubit ellipse at $k=0$. (b) Survival fraction $1-p$ of the fitted replacement model versus the number of logical two-qubit gates on both platforms, with exponential fits (dashed). Points at the bottom are fully randomized circuits ($p>0.99$).}\label{fig:ibm}
\end{figure}

\begin{table}[t]
\centering
\caption{Subspace size and Galerkin error for the full hardware samples (1500 shots per case). Admissible: dimension of the symmetry sector. Used: observed admissible modes. The last three columns give the relative energy error of the hardware set and of classical selections of the same size. Values near $10^{-15}$ are at the level of rounding.}\label{tab:equal-size}
\footnotesize
\begin{tabular}{llrrcccc}
\toprule
Platform & Case & Admissible & Used & Fraction & Hardware & Residual-greedy & Oracle\\
\midrule
IonQ & Ellipse $8^2$ & 64 & 60 & 0.94 & $1.5\times10^{-6}$ & $1.3\times10^{-8}$ & $1.3\times10^{-8}$\\
IonQ & Lens $8^2$ & 16 & 16 & 1.00 & $4\times10^{-16}$ & $4\times10^{-16}$ & $4\times10^{-16}$\\
IBM & Ellipse $8^2$ & 64 & 61 & 0.95 & $2.0\times10^{-5}$ & $6.9\times10^{-9}$ & $6.9\times10^{-9}$\\
IBM & Lens $8^2$ & 16 & 16 & 1.00 & $4\times10^{-16}$ & $4\times10^{-16}$ & $4\times10^{-16}$\\
IBM & Ellipse $16^2$ & 256 & 227 & 0.89 & $7.1\times10^{-7}$ & $2\times10^{-15}$ & $2\times10^{-15}$\\
IBM & Lens $16^2$ & 64 & 63 & 0.98 & $2.3\times10^{-3}$ & $9.8\times10^{-6}$ & $9.8\times10^{-6}$\\
\bottomrule
\end{tabular}
\end{table}

\begin{figure}[t]
\centering
\includegraphics[width=\textwidth]{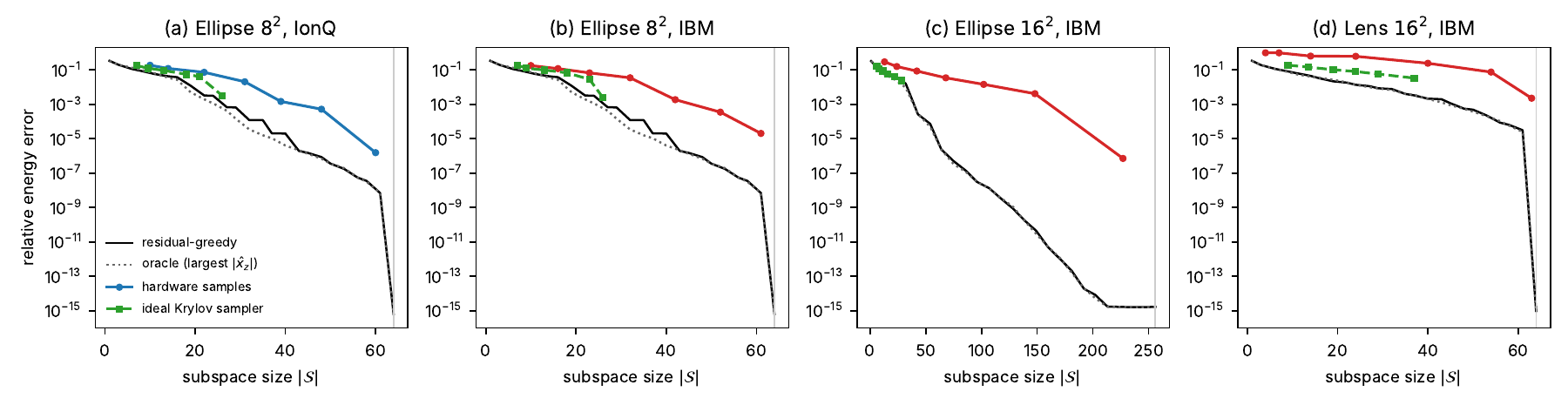}
\caption{Relative energy error of the Galerkin solution versus subspace size. Hardware samples (circles) are obtained by subsampling the measured counts at budgets of 15 to 480 shots, with the full sample as the last point; the parity filter is applied for the lens. Squares: ideal Krylov sampler with the same circuits and budgets. Solid line: classical residual-greedy selection. Dotted line: oracle selection of the largest $|\hat x_z|$. The vertical line marks the dimension of the admissible space.}\label{fig:equal-size}
\end{figure}

\paragraph{Subspace size and reconstruction.} Table~\ref{tab:equal-size} shows that the full hardware samples contain 89 to 100\% of the admissible modes. The small Galerkin errors therefore reflect a solve in nearly the whole admissible space, not a reduction of the subspace. Classical residual-greedy selection of the same size is as accurate or more accurate in every case, by up to eight orders of magnitude for the eight-qubit ellipse. Figure~\ref{fig:equal-size} compares the reconstructions at equal subspace size over the whole range of budgets. Except where both selections reach the whole sector, the hardware samples are less accurate than residual-greedy selection of the same size by factors from about 2 at the smallest budgets to about $10^{8}$ for the eight-qubit ellipse. The ideal Krylov sampler, by contrast, stays within about one order of magnitude of the greedy and oracle selections for the six-qubit ellipse, although it reaches only small subspaces with these budgets. The loss of efficiency per basis state therefore comes from the hardware noise, which adds modes with negligible solution weight.

The same noise improves coverage per shot at these sizes, because it spreads shots over modes that the ideal law rarely visits. Sample-based methods aim to reduce the subspace, however, so the relevant measure is the error per basis state, and by that measure noise is harmful. A test of dimension reduction requires admissible spaces much larger than the number of distinct samples.

In summary, the experiments on two platforms confirm the measurement convention, the parity selection rule, and the replacement-noise picture of Corollary~\ref{cor:noise}. The decay of the noiseless fraction per logical two-qubit gate is about $0.6\%$ on both platforms, whereas the gate-independent loss differs. The experiments do not demonstrate dimension reduction: the samples cover nearly the whole admissible space, and at equal subspace size classical residual-greedy selection is more accurate than the hardware samples. Larger registers are left for future work. The Trotterized lens circuits grow quickly with $T$ and are already fully randomized at eight qubits, so that more efficient compilations of the diagonal phase are needed. Clinical CT-FFR validation is a separate study that requires patient-specific geometry, physiological boundary conditions, and matched invasive measurement locations. The present cross-sectional Stokes benchmarks do not validate CT-FFR.

\section{Discussion}\label{sec:discussion}
In this method the quantum processor only proposes coordinates. The classical computer then determines their coefficients by a Galerkin solve. The bounds of Sec.~\ref{sec:galerkin-theory} relate the outcome law to the exposure of solution coordinates and turn the exposed coordinates into an energy-norm error. For positive-definite systems the Galerkin solve needs only a principal submatrix of $\hat A$, and its error decreases monotonically as the sampled set grows.

The geometric formulation of Sec.~\ref{sec:geometry} shows how these principles apply to linear systems from partial differential equations. The measurement basis matters more than the circuit. For the ducts, a Fourier basis matched to the boundary conditions reduces the subspace by one to two orders of magnitude, and the required number of modes grows slowly with the grid. The time schedule must be chosen for the linear system. The short step that is natural for eigenvalue SKQD fails on fine grids, because it does not dephase the low eigenvalues, whereas the multiscale heuristic improves on the fixed steps in several of our tests. Operator approximations are controlled by Proposition~\ref{prop:perturb}, while sampler approximations change coverage, achieved error, and postprocessing cost but not the target.

The examples also make the limitation concrete. In all problems studied, including curved vessels and chaotic cross sections, a classical residual-greedy selection finds the essential modes more efficiently than the quantum sampler. The measured efficiency $\eta_{\mathrm s}$ lies between $6\times10^{-5}$ and $6\times10^{-2}$, so rare modes are covered poorly. The elliptic duct and the curved vessels nevertheless exhibit the mechanism by which sampling can help: when the source and the solution have different mode structures, the dephased law captures the solution while a source-based selection does not. A benefit requires that this mismatch occur together with an essential set that no classical method identifies cheaply. Code-space recovery~\cite{csr} could supply a recovery signal where native filters are absent, such as the ellipse and the curved vessels. Initial states that flatten the outcome law on the essential set are a direct target for improving $\eta_{\mathrm s}$.

The hardware experiments add two practical observations. First, the noise on both platforms behaved approximately as the replacement noise of Corollary~\ref{cor:noise}, with a strength set mainly by the number of two-qubit gates. Second, a symmetry selection rule is an inexpensive noise diagnostic: the fraction of rejected lens shots measured the noise strength directly. Such rules exist whenever the cross section has an isometry that preserves the coefficients, and they should be built into the measurement basis where possible. On the other hand, the present experiments do not test the reduction of the subspace, because the number of shots exceeds the admissible dimension. Such a test requires larger registers and shallower circuits.

Several questions remain open. The high-probability bound of Corollary~\ref{cor:highprob-galerkin} requires complete discovery of an energy-norm essential set, which can be pessimistic; a bound in terms of retained energy would be sharper. The heuristic multiscale schedule performs well in our tests but has no guarantee. Finally, a useful-scale comparison must include classical input access, state preparation, and the full reconstruction cost, with matched accuracy.

\section{Conclusion}\label{sec:conclusion}
We have developed sample-based quantum Krylov solvers for positive-definite systems from elliptic partial differential equations. We formulated the sampling in any fixed basis with an efficient transform. For the Galerkin reconstruction we derived expected and high-probability energy-norm bounds, a sample complexity, a residual certificate, a support-completeness theorem, and a sampling efficiency. We formulated the method on Riemannian cross sections, showed that conformal coordinates make the stiffness matrix metric-independent in two dimensions, obtained measurement filters from the isometries of the cross section, and bounded the effect of geometric perturbations through the Lie derivative of the metric. Numerical results for elliptic, lens-shaped, and curved ducts confirm the guarantees and show strong compression in a Fourier basis. Experiments on trapped-ion and superconducting processors confirm the measurement convention, the parity selection rule, and the replacement-noise picture.

The analysis and the examples also identify the conditions under which quantum Krylov-state sampling can be expected to offer a benefit. (1) The source and the solution must have different mode structures, so that a selection from the source fails. (2) The essential set must not be identifiable by an inexpensive classical procedure such as residual-greedy selection. (3) The outcome law must be nearly flat on the essential set, so that the sampling efficiency is not small. (4) The hardware noise must be small enough that the samples do not fill the admissible space with modes of negligible solution weight. The elliptic duct and the curved vessels satisfy condition (1). Condition (2) fails in all examples studied here, and condition (4) fails in the present hardware experiments. Problems that satisfy all four conditions remain to be identified.

\section*{Acknowledgments}
This work was supported by Korea National Research Foundation (NRF) grants No. RS-2026-25612745 and RS-2025-00519428, ICT R\&D program of IITP-2023-2021-0-01810, RS-2023-00225385, RS-2025-25464788, AFOSR grant FA2386-22-1-4052, Cleveland Clinic Quantum Innovation Catalyzer Program and Amazon Web Services. We acknowledge the use of IBM Quantum services for this work. The views expressed are those of the authors, and do not reflect the official policy or position of IBM or the IBM Quantum team. This work was also supported by the National Quantum Laboratory at Maryland (QLab). We acknowledge IonQ for providing quantum computing resources for this work. The views expressed are those of the authors, and do not reflect the official policy or position of IonQ.
The authors used the AI assistants ChatGPT (OpenAI) and Claude (Anthropic) to help draft and edit the manuscript, to check the mathematical derivations, and to write numerical and analysis scripts. The authors reviewed and verified all results and take full responsibility for the content of this article.

\section*{Author contributions}
D.A. conceived the work, developed the theory and the algorithms, carried out the numerical studies, and wrote the first draft of the manuscript. B.P. carried out the hardware implementation. Both authors reviewed and edited the manuscript.

\section*{Data availability}
The derivations are given in the text and the appendices. The raw hardware counts, the submitted circuits, the calibration data, the job identifiers, the source data of Tables~\ref{tab:hw-results} and~\ref{tab:equal-size} and Figs.~\ref{fig:ibm} and~\ref{fig:equal-size}, and the scripts that reproduce the numerical and hardware results are available from the corresponding author upon reasonable request.

\section*{Competing interests}
D.A. is the founder of Singularity Quantum Inc. and holds an equity interest in the company; the authors declare no other competing interests.

\appendix

\section{Elements of Riemannian geometry}\label{app:riemann}
This appendix summarizes the notions of Riemannian geometry used in Sec.~\ref{sec:geometry}. Standard references are Refs.~\cite{Nakahara2003,Jost2017,Chavel1984}. We use the summation convention.

\paragraph{Metric and volume.} A Riemannian manifold $(\Omega,g)$ is a smooth manifold with a positive-definite symmetric tensor $g_{ij}$ in local coordinates $u^i$. The line element is $ds^2=g_{ij}du^idu^j$, and $g^{ij}$ denotes the inverse matrix. The volume element is $d\mu_g=\sqrt g\,du^1\cdots du^n$ with $\sqrt g=\sqrt{\det g_{ij}}$. For a map $\Phi:R\to\Omega\subset\mathbb R^n$ from a coordinate domain, the induced metric is $g_{ij}=\partial_i\Phi\cdot\partial_j\Phi$.

\paragraph{Gradient, divergence, and the Laplace--Beltrami operator.} The gradient of a function is $(\nabla_gw)^i=g^{ij}\partial_jw$, and the divergence of a vector field is $\operatorname{div}_gV=g^{-1/2}\partial_i(\sqrt g\,V^i)$. The weighted operator of Eq.~\eqref{eq:weak} is
\begin{equation}
 -\nabla_g\cdot(\varrho\nabla_gw)=-\frac1{\sqrt g}\,\partial_i\bigl(\varrho\sqrt g\,g^{ij}\partial_jw\bigr).
 \label{eq:laplace-beltrami}
\end{equation}
For $\varrho=1$ it is the Laplace--Beltrami operator $-\Delta_g$. By multiplying Eq.~\eqref{eq:laplace-beltrami} by a test function $v$ that vanishes on the wall and integrating with $d\mu_g$, we obtain the left side of Eq.~\eqref{eq:weak}. The integrand $\varrho\,g^{ij}\partial_iw\,\partial_jv\,\sqrt g$ is a density, so the weak form does not depend on the choice of coordinates.

\paragraph{Conformal metrics in two dimensions.} If $g_{ij}=h^2\delta_{ij}$, then $\sqrt g=h^2$ and $\sqrt g\,g^{ij}=\delta^{ij}$, so that $\Delta_g=h^{-2}\Delta$ and the Dirichlet energy $\int g^{ij}\partial_iw\,\partial_jw\,d\mu_g$ equals its flat value. This is the content of Proposition~\ref{prop:conformal}. The Gaussian curvature of a conformal metric is $\mathcal K=-h^{-2}\Delta\ln h$. A metric induced by a conformal map of a planar domain is flat, so $\ln h$ is harmonic. For the elliptic coordinates, $h^2=\ell_g^2|\sinh(\mu+i\nu)|^2$, and for the bipolar coordinates, $h=\ell_g/|\cosh\tau-\cos\sigma|$; in both cases $\ln h$ is the real part of the logarithm of an analytic function. This identity provides a simple check of a numerically computed metric factor.

\paragraph{Lie derivative.} Let $X$ be a vector field with flow $\phi_s$. The Lie derivative of a tensor field $T$ is
\begin{equation}
 \mathcal L_XT=\frac{d}{ds}\Big|_{s=0}\phi_s^*T .
 \label{eq:lie}
\end{equation}
For a function, $\mathcal L_Xw=X^i\partial_iw$. For a vector field, $\mathcal L_XY=[X,Y]$. For the metric,
\begin{equation}
 (\mathcal L_Xg)_{ij}=X^k\partial_kg_{ij}+g_{kj}\partial_iX^k+g_{ik}\partial_jX^k=\nabla_iX_j+\nabla_jX_i ,
 \label{eq:lie-metric}
\end{equation}
where $\nabla$ is the Levi-Civita connection. The trace is $g^{ij}(\mathcal L_Xg)_{ij}=2\operatorname{div}_gX$. Pull-back commutes with $d/ds$, so $\frac{d}{ds}\phi_s^*T=\phi_s^*\mathcal L_XT$ for all $s$.

\paragraph{Killing and conformal Killing fields.} A Killing field satisfies $\mathcal L_Xg=0$. Its flow consists of isometries, and it commutes with $\Delta_g$. Examples are $\partial_\vartheta$ for a disk in polar coordinates and translations along a straight duct. A conformal Killing field satisfies $\mathcal L_Xg=2\sigma g$ for a function $\sigma$. In two dimensions every analytic function $\zeta(u^1+iu^2)$ defines a conformal Killing field of the flat metric, which is why conformal maps are so plentiful in the plane. Proposition~\ref{prop:symmetry} uses Killing fields and discrete isometries, and Proposition~\ref{prop:shape} uses the trace-free part of $\mathcal L_Vg$, which vanishes exactly for conformal Killing fields.

\paragraph{Variations of the metric.} For a variation $\delta g_{ij}$,
\begin{equation}
 \delta g^{ij}=-g^{ik}\delta g_{kl}g^{lj},\qquad \delta\sqrt g=\tfrac12\sqrt g\,g^{kl}\delta g_{kl}.
 \label{eq:metric-variation}
\end{equation}
With $\delta g=s\mathcal L_Vg$ these formulas give Eq.~\eqref{eq:shape}. A moving wall, as in a pulsating vessel, can be treated in the same way: if the wall moves with velocity $V$, the pulled-back metric evolves as $\partial_tg=\mathcal L_Vg$, which is the geometric content of an arbitrary Lagrangian--Eulerian formulation.

\paragraph{Tube coordinates.} Let a centerline have arc length $s$, curvature $\kappa_c$, and torsion $\tau_c$. Let $(\rho,\vartheta)$ be polar coordinates in the normal plane, with $\vartheta$ measured from a direction that rotates relative to the principal normal by the accumulated torsion angle $\vartheta_0(s)$, $|\vartheta_0'|=|\tau_c|$. Germano showed that this choice removes the off-diagonal terms and gives the orthogonal metric $ds^2=d\rho^2+\rho^2d\vartheta^2+H^2ds^2$, where $H=1-\kappa_c\rho\cos(\vartheta-\vartheta_0)$~\cite{Germano1982}. Over a short segment, or for $\tau_c=0$, $\vartheta_0$ can be set to zero, which gives the factor $H$ of Sec.~\ref{sec:curved}. The cross-sectional coefficients of Eq.~\eqref{eq:curved} follow from this metric.

\paragraph{Spectral asymptotics.} For the Dirichlet Laplace--Beltrami operator on a two-dimensional domain, Weyl's law states that the number of eigenvalues below $\lambda$ grows as $|\Omega|_g\lambda/(4\pi)$~\cite{Chavel1984}. Since $|\Omega|_g=\int\sqrt g$ is unchanged by a change of coordinates, the high-frequency part of the spectrum does not depend on the coordinate map.

\section{Supplementary analyses}\label{app:supp}
This appendix collects analyses that support the main text but are not needed to follow it.

\subsection{Green's-function interpretation}\label{sec:green}
The sampling laws have a natural interpretation in terms of Green's functions~\cite{KadanoffBaym1962}. We regard $A$ as a single-particle Hamiltonian and define, for the basis vector $|z\rangle_F$ and the source $b$,
\begin{equation}
 G^{\mathrm{prop}}_{zb}(T)=-i\langle z|_Fe^{-iAT}|b\rangle,\qquad
 G^{R}_{zb}(\omega)=\langle z|_F(\omega-A+i0^+)^{-1}|b\rangle,\qquad
 \rho_{zb}(\omega)=2\pi\sum_\lambda a_\lambda(z)\,\delta(\omega-\lambda).
 \label{eq:green}
\end{equation}
Here $\rho_{zb}$ is a cross spectral measure and $G^{\mathrm{prop}}$ is a propagator matrix element. It is not a many-body greater Green's function, which would require a specified state, field operators, and occupation factors. The resolvent interpretation is exact for the finite matrix without those statistical assumptions.

\begin{proposition}[Green's-function form]\label{prop:green}
The quantities of Sec.~\ref{sec:galerkin-theory} are
\begin{align}
 q(z)&=\frac1d\sum_k\bigl|G^{\mathrm{prop}}_{zb}(T_k)\bigr|^2,\qquad
 \nonumber\\
 q^\infty(z)&=\lim_{\mathcal T\to\infty}\frac1{\mathcal T}\int_0^{\mathcal T}\bigl|G^{\mathrm{prop}}_{zb}(T)\bigr|^2dT,\label{eq:green1}\\
 \hat b_z&=\int\frac{d\omega}{2\pi}\rho_{zb}(\omega),\qquad
 \nonumber\\
 \hat x_z&=-G^R_{zb}(0)=\int\frac{d\omega}{2\pi}\frac{\rho_{zb}(\omega)}{\omega}.\label{eq:green2}
\end{align}
\end{proposition}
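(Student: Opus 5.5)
The plan is to expand every quantity in the spectral decomposition $A=\sum_\lambda\lambda P_\lambda$, with $\lambda$ running over the distinct eigenvalues, and to use $a_\lambda(z)=\langle z|_FP_\lambda b\rangle$ from Eq.~\eqref{eq:a}. In this form each identity of Proposition~\ref{prop:green} becomes a finite computation. For the propagator,
\[
G^{\mathrm{prop}}_{zb}(T)=-i\sum_\lambda a_\lambda(z)e^{-i\lambda T}.
\]
Hence $|G^{\mathrm{prop}}_{zb}(T_k)|^2=|\langle z|_Fe^{-iAT_k}|b\rangle|^2$, because the phase $-i$ has modulus one. Averaging over $k$ gives the first identity of Eq.~\eqref{eq:green1}, which is a restatement of Eq.~\eqref{eq:q}.

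For the dephased law, I will write
\[
|G^{\mathrm{prop}}_{zb}(T)|^2=\sum_{\lambda,\mu}a_\lambda(z)\overline{a_\mu(z)}\,e^{-i(\lambda-\mu)T}.
\]
This is a finite sum, so the time average can be taken term by term. The diagonal terms $\lambda=\mu$ contribute $\sum_\lambda|a_\lambda(z)|^2=q^\infty(z)$. For $\lambda\neq\mu$, the quantity $\mathcal T^{-1}\int_0^{\mathcal T}e^{-i\omega T}dT$ with $\omega=\lambda-\mu\neq0$ is bounded in modulus by $2/(|\omega|\mathcal T)$, which tends to $0$. This step is the continuous analogue of Lemma~\ref{lem:finite}. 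The point that needs care, and the main place the argument could go wrong, is that the sum runs over \emph{distinct} eigenvalues. The projectors $P_\lambda$ group degenerate eigenvectors, so every off-diagonal pair has a nonzero frequency and does not survive averaging. If the sum ran over eigenvectors instead, degenerate cross terms would remain and the identity would fail. This is the grouping already noted in Sec.~\ref{sec:theorycheck}.

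For Eq.~\eqref{eq:green2}, the first identity follows because integrating the spectral measure $\rho_{zb}$ against $d\omega/2\pi$ gives $\sum_\lambda a_\lambda(z)=\langle z|_F\sum_\lambda P_\lambda b\rangle=\hat b_z$, using $\sum_\lambda P_\lambda=I$. For the resolvent,
\[
G^R_{zb}(\omega)=\sum_\lambda\frac{a_\lambda(z)}{\omega-\lambda+i0^+}.
\]
Since $A$ is positive definite, every $\lambda\geq\lambda_{\min}>0$, so the limit $0^+$ is harmless at $\omega=0$. This gives $G^R_{zb}(0)=-\sum_\lambda a_\lambda(z)/\lambda=-\hat x_z$ by Eq.~\eqref{eq:a}. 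Finally, $\int\frac{d\omega}{2\pi}\rho_{zb}(\omega)/\omega=\sum_\lambda a_\lambda(z)/\lambda$. This integral is well defined because the support of $\rho_{zb}$ lies in $[\lambda_{\min},\lambda_{\max}]$ and stays away from $\omega=0$, so no principal-value prescription is needed.
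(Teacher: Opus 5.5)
Your proposal is correct and follows the same route as the paper: expand in the spectral projectors $P_\lambda$, read off the first identity from Eq.~\eqref{eq:q}, let the cross terms with $\lambda\neq\mu$ average to zero, and obtain Eq.~\eqref{eq:green2} from Eq.~\eqref{eq:a}. You add details the paper leaves implicit: the explicit $2/(|\omega|\mathcal T)$ decay, the grouping of degenerate eigenvalues, and the use of $\lambda_{\min}>0$ to make the $i0^+$ prescription and the $1/\omega$ moment harmless.
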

\begin{proof}
The first relation is Eq.~\eqref{eq:q}. The second follows from $G^{\mathrm{prop}}_{zb}(T)=-i\sum_\lambda a_\lambda(z)e^{-i\lambda T}$, since the cross terms with $\lambda\neq\mu$ average to zero. Equation~\eqref{eq:green2} follows from Eq.~\eqref{eq:a}.
\end{proof}

Proposition~\ref{prop:green} gives the sampling theory a spectral meaning. The source is the zeroth moment of the cross spectral measure, equal to $\langle z|_Fb\rangle$, rather than a unit-normalized diagonal spectral sum rule; cross spectral weights can have either sign or be complex. The solution is the static response, that is, the moment of order $-1$. A single state measures only the zeroth moment, which can vanish by cancellation. The dephased law $q^\infty$ is the sum of squared cross amplitudes at distinct eigenvalues. It does not measure the signed spectral measure itself. A time window of length $\mathcal T$ resolves frequencies only to about $2\pi/\mathcal T$, and the moment of order $-1$ is dominated by the lowest poles. This motivates the upper time scale of Algorithm~\ref{alg:times}; actual dephasing requires resolution of the relevant eigenvalue differences, as in Lemma~\ref{lem:finite}. On the other hand, the quantum device returns only intensities $|G^{\mathrm{prop}}_{zb}|^2$. The method therefore locates the spectral weight but does not reconstruct the Green's function, in contrast with the approach of Ref.~\cite{Patel2026}.

\subsection{Spectral geometry and sampling efficiency}\label{sec:ergodic}
The efficiency $\eta_{\mathrm s}$ of Eq.~\eqref{eq:eta} depends on how the eigenvectors of $A$ are spread over the measurement basis, since $q^\infty(z)=\sum_\lambda|a_\lambda(z)|^2$. In spectral geometry, eigenfunctions of domains with integrable geodesic flow, such as the disk and the ellipse, are structured, while those of chaotic domains, such as the stadium, become equidistributed at high energy~\cite{Shnirelman1974,Zelditch1987,ColinDeVerdiere1985,Bunimovich1979}. This suggests a hypothesis: a chaotic cross section might give a flatter $q$, a larger $\eta_{\mathrm s}$, and an essential set that is harder to find classically. We tested this hypothesis in Sec.~\ref{sec:geomresults} and found no support for it. A theoretical reason is that quantum ergodicity is a statement about high-lying eigenfunctions, while the solution of an elliptic problem with smooth data is carried by the lowest modes.

\paragraph{Numerical test.} We tested the hypothesis of Sec.~\ref{sec:ergodic} in two ways. First, we used smooth non-separable conformal deformations of a square, $h^2=\exp(2\,\mathrm{Re}\,\phi)$ with $\phi$ a random cubic polynomial of amplitude between 0.3 and 1.5. Compared with a separable metric factor on the same square, the efficiency did not increase ($\eta_{\mathrm s}=3\times10^{-4}$ to $3\times10^{-3}$ versus $8\times10^{-4}$), and the residual-greedy selection remained 2 to 13 times more accurate than the Krylov sampling. Second, we compared cross sections of equal area with integrable and chaotic geodesic flow by Brinkman penalization on a bounding box (Table~\ref{tab:shapes}). Penalization is a Class I approximation that avoids a conformal map. The sharp interface spreads the solution in the Fourier basis, so the subspaces are larger and the errors higher than with conformal coordinates. The chaotic stadium does not give a higher efficiency than the disk, the ellipse, or the square. In all four cases the residual-greedy selection is about twice as accurate as the Krylov sampling.

\begin{table}[t]
\centering
\small\setlength{\tabcolsep}{4pt}
\caption{Cross sections of equal area by Brinkman penalization ($32\times32$ bounding box, $\Gamma=100$, penalty $10^4$, multiscale heuristic, $M=19200$). IPR is the mean inverse participation ratio of the 40 lowest eigenvectors in the Fourier basis.}\label{tab:shapes}
\begin{tabular}{lrcccccc}
\toprule
Cross section & $|\mathcal E_\varepsilon|$ & $\eta_{\mathrm s}$ & IPR & $|\mathcal S|$ & Krylov & top-$|\hat b|$ & residual-greedy\\
\midrule
Disk (integr.) & 249 & $4.6\times10^{-4}$ & 0.094 & 198 & $1.2\times10^{-1}$ & $1.2\times10^{-1}$ & $5.4\times10^{-2}$\\
Ellipse $B_e/A_e=0.6$ (integr.) & 247 & $4.5\times10^{-4}$ & 0.165 & 186 & $1.3\times10^{-1}$ & $1.8\times10^{-1}$ & $5.6\times10^{-2}$\\
Square (integr.) & 225 & $5.9\times10^{-5}$ & 0.147 & 117 & $9.3\times10^{-2}$ & $8.5\times10^{-2}$ & $3.7\times10^{-2}$\\
Stadium (chaotic) & 240 & $1.3\times10^{-4}$ & 0.173 & 188 & $1.1\times10^{-1}$ & $9.2\times10^{-2}$ & $5.4\times10^{-2}$\\
\bottomrule
\end{tabular}
\end{table}

\subsection{Metric perturbation}\label{app:metric}
We perturbed the metric factor of the lens and the disk by $\delta(h^2)=\varepsilon h^2\sin(k_1\sigma+k_2\tau+\phi)$ with random $k_1,k_2,\phi$, which models the error of a numerical conformal map, with the load held fixed. Table~\ref{tab:perturb} shows the largest relative energy-norm change of the solution over five samples. The bound of Proposition~\ref{prop:perturb} holds and is within a factor of 1.7 of the observed change.

\begin{table}[t]
\centering
\caption{Class I approximation: effect of a relative metric error $\varepsilon$ ($32\times32$, $\Gamma=100$, fixed load). Largest observed relative energy-norm change over five samples versus the bound $\varepsilon/(1-\varepsilon)$.}\label{tab:perturb}
\begin{tabular}{cccc}
\toprule
$\varepsilon$ & Lens $120^\circ$ & Disk & Bound\\
\midrule
$10^{-3}$ & $6.8\times10^{-4}$ & $6.0\times10^{-4}$ & $1.0\times10^{-3}$\\
$10^{-2}$ & $5.5\times10^{-3}$ & $6.4\times10^{-3}$ & $1.0\times10^{-2}$\\
$5\times10^{-2}$ & $3.3\times10^{-2}$ & $3.0\times10^{-2}$ & $5.3\times10^{-2}$\\
\bottomrule
\end{tabular}
\end{table}

\subsection{Circuit resources and source preparation}\label{app:resources}
Resource accounting must include preparation, evolution, and the final basis transform. In the ellipse, the separable source has exact Schmidt rank two and the evolution factors into two register unitaries. Generic synthesis of a register unitary costs exponentially in the register size. Separability alone therefore does not give an efficient algorithm asymptotically.

For the lens, the source is generally not exactly rank two. A Schmidt truncation with retained rank $r_b$ has relative amplitude tail
$\epsilon_b=(\sum_{j>r_b}s_j^2/\sum_js_j^2)^{1/2}$, where $s_j$ are its singular values. A truncated preparation must report this tail and its effect on the sampling law. The pilot uses exact state preparation and exact DST-I register unitaries. Resource estimates that assume a power-of-two QFT or a rank-two source do not apply to it.

A compiled resource table should state the software versions, synthesis seed, time schedule, Trotter refinement, gate set, number of active qubits, and whether the counts are logical or native to the backend. We do not report lens gate counts here, because they require a construction consistent with this grid and preparation, and we make no claim of hardware feasibility.

\subsection{IonQ outcome laws}\label{app:ionq-laws}
Figure~\ref{fig:ionq} shows the circuit laws and the measured IonQ frequencies for the six six-qubit circuits of Sec.~\ref{sec:hardware}.

\begin{figure}[t]
\centering
\includegraphics[width=\textwidth]{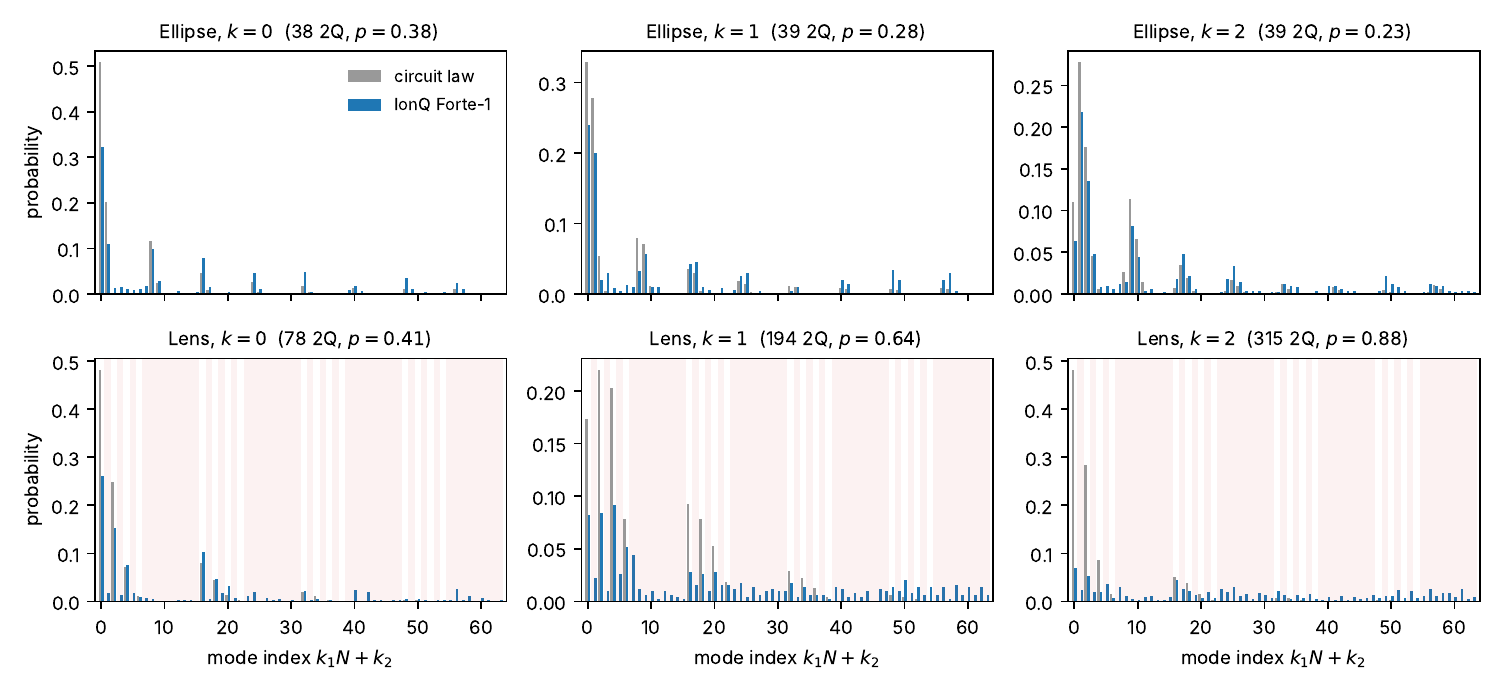}
\caption{Circuit laws (gray) and measured frequencies on IonQ Forte-1 (blue, 500 shots) for the six circuits. The titles give the number of two-qubit gates and the fitted replacement probability $p$. For the lens, the shaded indices lie outside the odd--odd sector and are removed by the parity filter.}\label{fig:ionq}
\end{figure}

\section{Proof of Lemma~\ref{lem:finite}}\label{app:dephasing}
By writing $|b\rangle=\sum_\lambda P_\lambda|b\rangle$, we obtain $\langle z|_Fe^{-iAk\Delta t}|b\rangle=\sum_\lambda a_\lambda(z)e^{-i\lambda k\Delta t}$. Hence
\begin{equation}
 q(z)=\frac1d\sum_{k=0}^{d-1}\Bigl|\sum_\lambda a_\lambda(z)e^{-i\lambda k\Delta t}\Bigr|^2=\sum_{\lambda,\mu}a_\lambda(z)\overline{a_\mu(z)}\,D_d\bigl((\lambda-\mu)\Delta t\bigr).
\end{equation}
The terms with $\lambda=\mu$ give $q^\infty(z)$, since $D_d(0)=1$. The remaining terms give Eq.~\eqref{eq:dirichlet}. The geometric sum gives $|D_d(\theta)|=|\sin(d\theta/2)|/(d|\sin(\theta/2)|)\leq1/(d|\sin(\theta/2)|)$, and $|D_d|\leq1$ holds trivially.

\section{Discretization of the duct examples}\label{app:discretization}
For the elliptic duct we use the cell centers $\mu_i=(i+\tfrac12)\Delta\mu$ and $\nu_j=(j+\tfrac12)\Delta\nu$ with $\Delta\mu=\mu_0/N$ and $\Delta\nu=\pi/(2N)$. The Neumann condition at $\mu=0$ is imposed by the mirror ghost value $w_{-1}=w_0$, and the Dirichlet condition at $\mu_0$ by the antisymmetric ghost value $w_N=-w_{N-1}$. The resulting tridiagonal matrix has the eigenvectors $\cos(\pi(k+\tfrac12)(i+\tfrac12)/N)$, which form the DCT-IV basis. With Neumann conditions at both ends, the eigenvectors are $\cos(\pi k(j+\tfrac12)/N)$, which form the DCT-II basis. For the lens we use the interior points $\sigma_i=\sigma_0+i\Delta\sigma$ and $\tau_j=-\tau_{\max}+j\Delta\tau$, $i,j=1,\dots,N$, with Dirichlet conditions, for which the DST-I diagonalizes the one-dimensional Laplacians. The flow rate is $Q=\iint w\,h^2\,du\,dv$, evaluated by the midpoint rule. For the curved model~\eqref{eq:curved} the face coefficients $\varrho$ are evaluated at the cell faces, which keeps $K$ symmetric. For row-major mode index $k_1N+k_2$ with $N=2^m$, the second coordinate occupies qubits $0,\ldots,m-1$ and the first occupies qubits $m,\ldots,2m-1$. Thus qubit 0 is the least significant bit of the second coordinate. Both exported circuits and returned integer outcomes must use this convention.

\bibliographystyle{unsrtnat}
\bibliography{references}
\end{document}